\title{Steiner Tree Parameterized by Multiway Cut and Even Less}
\author{Bart M.P. Jansen}{Eindhoven University of Technology, The Netherlands}{b.m.p.jansen@tue.nl}{https://orcid.org/0000-0001-8204-1268
}{}
\author{C\'eline M.F. Swennenhuis}{Eindhoven University of Technology, The Netherlands}{cmfswennenhuis@gmail.com}{
https://orcid.org/0000-0001-9654-8094}{}
\authorrunning{B.M.P. Jansen and C.M.F. Swennenhuis} 
\keywords{fixed-parameter tractability, Steiner Tree, structural parameterization, H-treewidth} 
\def\DEBUG{true}
\ifdefined\DEBUG{}
\def\rem#1{{\marginpar{\raggedright\scriptsize #1}}}
\newcommand{\bmpr}[1]{\rem{\textcolor{orange}{\(\bullet \) #1}}}
\newcommand{\cs}[1]{{\color{blue}{#1}}}
\newcommand{\bmpr}[1]{}
\newcommand{\cs}[1]{}
\newcommand{\Oh}{\mathcal{O}}
\newcommand{\hh}{\mathcal{H}}
\newcommand{\tw}{\mathsf{tw}}
\newcommand{\ed}{\mathsf{ed}}
\newcommand{\twtfree}{\tw_{{K}}}
\newcommand{\edkfree}{\ed_{{K}}}
\newcommand{\twdeltafree}{\tw_{\triangle\text{-}\mathrm{free}}}
\newcommand{\nat}{\mathbb{N}}
\newcommand{\T}{\mathbb{T}}
\newcommand{\KA}{\textnormal{\textbf{(K.A)}}\xspace}
\newcommand{\KB}{\textnormal{\textbf{(K.B)}}\xspace}
\newcommand{\KC}{\textnormal{\textbf{(K.C)}}\xspace}
\newcommand{\KD}{\textnormal{\textbf{(K.D)}}\xspace}
\newcommand{\DA}{\textnormal{\textbf{($\triangle$.A)}}\xspace}
\newcommand{\DB}{\textnormal{\textbf{($\triangle$.B)}}\xspace}
\newcommand{\DC}{\textnormal{\textbf{($\triangle$.C)}}\xspace}
\newcommand{\DD}{\textnormal{\textbf{($\triangle$.D)}}\xspace}
\newcommand{\poly}{\mathrm{poly}}
\newcommand{\cost}{\mathrm{cost}}
\begin{document}

\maketitle

\begin{abstract}
In the \textsc{Steiner Tree} problem we are given an undirected edge-weighted graph as input, along with a set~$K$ of vertices called \emph{terminals}. The task is to output a minimum-weight connected subgraph that spans all the terminals. The famous Dreyfus-Wagner algorithm running in~$3^{|K|}\poly(n)$ time shows that the problem is fixed-parameter tractable parameterized by the number of terminals. We present fixed-parameter tractable algorithms for \textsc{Steiner Tree} using structurally smaller parameterizations.

Our first result concerns the parameterization by a multiway cut~$S$ of the terminals, which is a vertex set~$S$ (possibly containing terminals) such that each connected component of~$G-S$ contains at most one terminal. We show that \textsc{Steiner Tree} can be solved in~$2^{\Oh(|S|\log|S|)}\poly(n)$ time and polynomial space, where~$S$ is a minimum multiway cut for~$K$. The algorithm is based on the insight that, after guessing how an optimal Steiner tree interacts with a multiway cut~$S$, computing a minimum-cost solution of this type can be formulated as minimum-cost bipartite matching.

Our second result concerns a new hybrid parameterization called \emph{$K$-free treewidth} that simultaneously refines the number of terminals~$|K|$ and the treewidth of the input graph. By utilizing recent work on \textsc{$\mathcal{H}$-Treewidth} in order to find a corresponding decomposition of the graph, we give an algorithm that solves \textsc{Steiner Tree} in time~$2^{\Oh(k)} \poly(n)$, where~$k$ denotes the~$K$-free treewidth of the input graph. To obtain this running time, we show how the \emph{rank-based} approach for solving \textsc{Steiner Tree} parameterized by treewidth can be extended to work in the setting of $K$-free treewidth, by exploiting existing algorithms parameterized by~$|K|$ to compute the table entries of leaf bags of a tree $K$-free decomposition.
\end{abstract}

\clearpage

\section{Introduction}

\textsc{Steiner Tree} is a famous problem in algorithmic graph theory~\cite{DIMACS,survey,PACE}. In this problem, we are given an undirected edge-weighted graph~$G$ and a set~$K$ of \emph{terminal vertices} that we need to connect using edges of the graph. The goal is to find a minimum-weight connected subgraph that spans all these terminals, commonly known as a Steiner tree. The problem has a wide array of applications in industry, such as telecommunications, designing integrated circuits, molecular biology, and object detection (e.g.,~\cite{biologyideker2002discovering,circuitlengauer2012combinatorial,survey,telecommunication,objectrussakovsky2010steiner}).

Since the  \textsc{Steiner Tree} problem is~$\mathsf{NP}$-hard, we cannot hope to design an exact polynomial-time algorithm for this problem~\cite{Karp1972}. 
However, a popular approach is to bound the running time not just in terms of the input size~$n$, but to also take the influence of a secondary measurement~$k$ (referred to as the \emph{parameter}) into account.
In particular, we are interested in parameters that allow for fixed-parameter tractable (FPT) algorithms, i.e., algorithms that run in time~$f(k)\cdot \poly(n)$  where~$n$ is the size of the input,~$k$ the parameter, and~$f(\cdot)$ some computable function. One of the most natural parameters to consider for  \textsc{Steiner Tree} is~$|K|$, the number of terminals. The famous algorithm by Dreyfus and Wagner~\cite{dreyfus1971steiner} computes a minimum Steiner tree of an $n$-vertex graph in time~$3^{|K|}\cdot \poly(n)$ using exponential space, which was later improved by several authors~\cite{fuchs2007dynamic,lokshtanov2010saving}. In particular, Fomin, Kaski, Lokshtanov, Panolan, and Saurabh~\cite{fomin2019parameterized} present an algorithm that runs in single-exponential time~$7.97^{|K|}\cdot\poly(n)$ and uses \emph{polynomial} space. Another commonly-used parameter for \textsc{Steiner Tree} is the \emph{treewidth}~$\tw(G)$ of the input graph~$G$, which measures its structural similarity to a tree. A straight-forward dynamic program solves the problem in time~$2^{\Oh(\tw(G) \log \tw(G))} \poly(n)$. Using \emph{representative sets}, a single-exponential running time of $2^{\Oh(\tw(G))} \poly(n)$ can be obtained (cf.~\cite{cygan2022solving}).

For many other combinatorial problems on graphs, including graph modification problems such as \textsc{Vertex Cover} and \textsc{Odd Cycle Transversal}, a lot of effort has been invested into developing FPT algorithms using structurally \emph{smaller} parameterizations than standard measures of solution size or treewidth~\cite{Argawal2022,DistanceFromTriviality,HtreewidthEIBEN202157,distancetotriviality1guo2004structural,jansen2021vertex}. But to the best of our knowledge, no previous papers present FPT algorithms for \textsc{Steiner Tree} using parameterizations structurally smaller than the number of terminals. As our main contributions, we identify two relevant terminal-aware refined parameterizations for \textsc{Steiner Tree} and develop corresponding FPT algorithms.

\subparagraph*{Multiway cut} The first parameter we consider is the size of a (node) multiway cut for the terminals, i.e., a set of vertices~$S\subseteq V(G)$ such that every connected component of~$G-S$ contains at most one terminal. Since~$S$ is allowed to contain terminals, any instance has a multiway cut of size~$|K|-1$. In general, the size of a minimum multiway cut can be arbitrarily much smaller than~$|K|$. We show that we can solve \textsc{Steiner Tree} in FPT time and \emph{polynomial} space when we parameterize it by the size of a multiway cut for the terminals.

\begin{restatable}{theorem}{multiway}\label{thm:polyspace}
There is a polynomial-space algorithm that, given as input a graph~$G$ with weight function~$\cost \colon E(G) \to \mathbb{N}$, a set of terminals~$K \subseteq V(G)$, and a multiway cut~$S$ for~$K$, 
 outputs a minimum-weight Steiner tree in time~$2^{\Oh(|S|\log|S|)} \poly(n)$.
\end{restatable}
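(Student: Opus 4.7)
The plan is to fix an arbitrary optimal Steiner tree~$T$, enumerate a compact \emph{topological signature} describing how $T$ interacts with the multiway cut~$S$, and for each signature compute the cheapest realisation via a minimum-cost bipartite matching.

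Set $S_T := V(T) \cap S$ and let $W_1, \ldots, W_r$ be the maximal connected components of $T - S_T$; each $W_i$ lies in a single connected component of $G - S$, contains at most one terminal, and attaches to $S_T$ through tree-edges whose $S_T$-endpoints form a set $A_i \subseteq S_T$. Contracting every $W_i$ to a placeholder vertex $w_i$ produces a \emph{skeleton tree} $T^\star$ on $S_T \cup \{w_1, \ldots, w_r\}$ in which the placeholders form an independent set and $\deg_{T^\star}(w_i) = |A_i|$, and correspondingly the weight of~$T$ decomposes as the weight of its $S_T$--$S_T$ edges plus the summed cost of realising each $w_i$ inside its host component. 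Placeholders of degree~$1$ correspond to leaf subtrees each covering one terminal of $V(G) \setminus S$ and will be handled implicitly; a routine tree-counting argument bounds the number of placeholders of degree at least~$2$ by $\Oh(|S|)$. A \emph{signature} records $(S_T, T^\star_{\mathrm{red}}, \tau)$ where $T^\star_{\mathrm{red}}$ is the reduced skeleton obtained by stripping all degree-$1$ placeholders and $\tau$ marks which surviving placeholders must cover a terminal; Cayley's formula then bounds the number of signatures by $2^{|S|} \cdot |S|^{\Oh(|S|)} = 2^{\Oh(|S|\log|S|)}$, and they are enumerable in polynomial space.

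For a fixed signature the $S_T$--$S_T$ weight is determined by $G[S]$, so what remains is to assign every placeholder of $T^\star_{\mathrm{red}}$ to a component of $G-S$ and to attach a leaf subtree at some $s \in S_T$ for every terminal of $G-S$ not yet covered. For every pair $(w_i, C)$ the realisation cost equals the minimum weight of a Steiner tree in $G[C \cup A_i]$ whose terminal set is $A_i$ together with the terminal of~$C$ when $\tau(w_i) = 1$; since $|A_i| \le |S|$, this sub-instance has at most $|S|+1$ terminals and solves in $7.97^{|S|}\poly(n)$ time and polynomial space using the polynomial-space algorithm of Fomin, Kaski, Lokshtanov, Panolan, and Saurabh~\cite{fomin2019parameterized}, and an analogous tabulation precomputes the leaf-subtree costs. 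The combined cost matrix then drives a minimum-cost bipartite matching that picks an optimal placeholder-to-component assignment together with a cheapest leaf attachment for each remaining terminal in polynomial time and polynomial space; taking the best signature value gives the optimum Steiner-tree weight, and backtracking through the matching and the local Steiner-tree computations reconstructs an actual tree.

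The main obstacle is justifying that a single bipartite matching truly captures the cheapest realisation of a signature --- equivalently, that the per-component cost aggregates additively over placeholders assigned to the same component. I expect to handle this by an uncrossing argument: should two placeholders of an optimal~$T$ share a host component, one can merge their subtrees-plus-attachments into a single higher-degree subtree while breaking one superfluous edge on the $T$-path between them through~$S_T$, producing another optimal tree whose skeleton fits a different enumerated signature in which a single higher-degree placeholder replaces the merged pair; thus no optimum is missed by restricting to signatures whose placeholders occupy pairwise distinct components, and the promised $2^{\Oh(|S|\log|S|)}\poly(n)$ time and polynomial-space bounds follow.
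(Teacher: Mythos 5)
Your overall architecture — enumerate a tree-shaped interaction pattern with $S$, bound the number of such patterns by Cayley's formula, compute per-placeholder Steiner trees via the polynomial-space algorithm of Fomin et al.\ on terminal sets of size $\Oh(|S|)$, and stitch things together with a minimum-cost bipartite matching — is the same approach as the paper. Your ``signature'' is essentially the paper's $S$-connecting system. However, there is a genuine gap in the final step.

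You restrict to signatures whose placeholders map injectively to components of $G-S$ and argue via an ``uncrossing'' that no optimum is lost: whenever two placeholders $w_i, w_j$ of an optimal tree share a host component, you claim you can merge their subtrees, break one edge of the $T$-path between them, and obtain another optimal tree. This claim is false. The two subtrees $W_i$ and $W_j$ are disjoint trees living in the same component $C$ of $G-S$, but there is no reason they can be connected \emph{inside} $C$ at low cost. Concretely, take $S = \{s_1,s_2,s_3,s_4\}$, terminals $K = \{t,s_4\}$, a component $C = \{a,b\}$ with edges $s_1a,\,as_2,\,s_3b,\,bs_4$ of weight $1$ each and $ab$ of weight $1000$, a unit-weight edge $s_2s_3$, and a separate component $\{t\}$ with a unit-weight edge $ts_1$. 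The unique optimal Steiner tree is the path $t\text{-}s_1\text{-}a\text{-}s_2\text{-}s_3\text{-}b\text{-}s_4$ of cost $6$; splitting at $S$ yields two degree-$2$ placeholders, $\{a\}$ attached to $\{s_1,s_2\}$ and $\{b\}$ attached to $\{s_3,s_4\}$, both in component $C$. Merging them forces the weight-$1000$ edge $ab$ into the solution; breaking the only $S$--$S$ edge $s_2s_3$ saves just $1$, so the merged tree costs $1005 \gg 6$. Under your injectivity restriction the signature realized by the true optimum is never enumerated, and the algorithm outputs a suboptimal tree.

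The paper avoids this by \emph{not} insisting on an injective assignment: the bipartite graph's right side consists of pairs $(p,j)$ for $j \in \{0,\dots,|\mathcal{S}|\}$, i.e., $|\mathcal{S}|+1$ ``copies'' of each component $p$, so several sets $S_i$ can be matched into the same component. Because the optimal tree's subtrees are pairwise edge-disjoint, their costs add correctly; in the converse direction the union of the assigned subtrees can only be cheaper, so no double-counting issue arises. The $j=0$ copy additionally encodes the case where the subtree for $S_i$ already contains the terminal of $C_p$, in which case the shortest path from that terminal to $S'$ is credited back — this is the piece that makes the ``leaf attachment'' savings coexist cleanly with the placeholder assignment inside a single matching, a point your write-up leaves implicit. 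If you keep your signature/skeleton formalism but replace the injective matching with this multiplicity-allowing one (and fold the terminal-savings into the edge weights as above), the argument goes through.
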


Our algorithm handles graphs whose weights are encoded in binary, as opposed to some other algorithms in the literature (\cite{lokshtanov2010saving,nederlof2013polyspace}) whose running time scales linearly with the value of the largest weight. The assumption that a multiway cut~$S$ for~$K$ is given as input is not restrictive, as a minimum multiway~$S$ cut for~$K$ can be found in~$4^{|S|}\poly(n)$ time and polynomial space\footnote{The algorithm from Chen, Liu, and Lu~\cite{MWCinFPT} is a bounded-depth branching algorithm, branching on important separators, making it a polynomial-space algorithm.}, due to an algorithm by Chen, Liu, and Lu~\cite{MWCinFPT}. \cref{thm:polyspace} improves upon an elementary \emph{exponential-space} algorithm with the same running time that was presented in a master's thesis supervised by the first author~\cite{Roozendaal23}.

\subparagraph*{$K$-free treewidth} The second parameterization we utilize refines the size of a multiway cut. To motivate the parameter, consider the following one-player game on a graph~$G$ with terminal vertices~$K$. In each round, one vertex is removed from each connected component. The game ends when each connected component contains at most one terminal. If there is a multiway cut~$S = \{s_1, \ldots, s_k\}$ of size~$k$, then the game can be won in at most~$k$ rounds by choosing vertex~$s_i$ in round~$i$ (or earlier, if not all remaining vertices of~$S$ belong to the same connected component). The minimum number of rounds needed is therefore never larger than the size of a minimum multiway cut. It can be arbitrarily much smaller: if deletions early in the game split the graph into multiple components, these are handled `in parallel' in subsequent rounds of the game.

For technical reasons, it will be convenient to consider the variation of the game that only ends when \emph{all} terminals have been removed from the graph, rather than ending when all terminals are separated. The number of rounds needed to win the latter variation is at most one more than the original: all terminals belong to different components when the original game ends, so at that point one additional round can delete one vertex from each connected component to eliminate all the terminals. Let~$\edkfree(G,K)$ (the \emph{elimination distance} to a $K$-free graph) denote the minimum number of rounds needed to eliminate all terminals from the graph. The previous discussion shows that if~$S$ is a multiway cut for~$K$ in~$G$, then we have~$\edkfree(G,K) \leq |S| + 1 \leq |K|$; hence it is a refined parameterization for \textsc{Steiner Tree}. The exponential dependence of our second algorithm can be bounded in terms of~$\edkfree(G,K)$. But there is an even smaller parameterization, called \emph{$K$-free treewidth}, that can be used to upper-bound the running time of the second algorithm we present. To introduce it, we briefly summarize an analogous range of parameterizations for vertex-deletion problems.

Our refined parameterizations for \textsc{Steiner Tree} are inspired by recent work on parameterized algorithms for \textsc{$\mathcal{H}$-Deletion}, which asks to find a minimum vertex set~$X$ in an input graph~$G$ that ensures~$G-X$ belongs to graph class~$\mathcal{H}$. The \textsc{Odd Cycle Transversal} problem is a prime example, which arises by letting~$\mathcal{H}$ be the class~$\mathsf{bip}$ of bipartite graphs. Recent work on \textsc{$\mathcal{H}$-Deletion}~\cite{Argawal2022,DistanceFromTriviality,HtreewidthEIBEN202157,jansen2021vertex,JansenKW23} has focused on improving parameterizations by the size of the deletion set~$X$ to parameterizations in terms of the \emph{elimination distance~$\ed_{\mathcal{H}}(G)$ to~$\mathcal{H}$}, which is the minimum number of rounds needed to obtain a graph in~$\mathcal{H}$ when removing one vertex from each connected component in each round. One of the results from this direction of work shows that \textsc{Odd Cycle Transversal} can be solved in time~$2^{\Oh(k)}\poly(n)$, where~$k = \ed_{\mathsf{bip}}(G)$~\cite{JansenKW23}. Through work of Bulian and Dawar~\cite{eliminationbulian2016graph}, it is known that the concept of elimination distance is related to the \emph{treedepth}~\cite{treedepth_nesetril2012bounded} of a graph~$G$: the treedepth is the minimum number of rounds needed to eliminate \emph{all} vertices. 

The famous graph parameter \emph{treewidth} is never larger than treedepth, but can be much smaller. Eiben et al.~\cite{HtreewidthEIBEN202157} proposed the notion of \emph{$\mathcal{H}$-treewidth}, where~$\mathcal{H}$ is a class of graphs. Roughly speaking, the $\mathcal{H}$-treewidth~$\tw_{\mathcal{H}}(G)$ of a graph~$G$ can be defined in terms of the minimum cost of a tree decomposition of a certain kind, in which (potentially large) leaf bags that induce a subgraph belonging to~$\mathcal{H}$ do not contribute to the cost. Hence~$\mathcal{H}$-treewidth captures how efficiently a graph can be decomposed into subgraphs belonging to~$\mathcal{H}$ along small separators in a treelike manner. It is known~\cite[Lemma 2.4]{jansen2021vertex} that~$\tw_{\mathcal{H}}(G) \leq \ed_{\mathcal{H}}(G)$ for all graphs~$G$, so that the resulting parameterization refines the $\mathcal{H}$-elimination distance. At the same time,~$\tw_{\mathcal{H}}(G)$ is not larger than the standard treewidth of~$G$, so that the resulting parameterization is a hybrid~\cite{DistanceFromTriviality} of standard treewidth and the solution size for $\mathcal{H}$-deletion. In their work on $\mathcal{H}$-treewidth, Jansen, de Kroon and W{\l}ordarczyk~\cite{jansen2021vertex} suggested that it may be interesting to explore variations of this parameter in the context of a set of terminal vertices. This is the route we pursue for \textsc{Steiner Tree}.

We complete the analogy between hybrid parameterizations for \textsc{$\mathcal{H}$-Deletion} and our parameterizations for \textsc{Steiner Tree} by introducing a notion called \emph{$K$-free treewidth}, denoted~$\twtfree(G,K)$ for a graph~$G$ with terminal set~$K$. While we defer formal definitions to \cref{sec:Kfreedecomp}, it intuitively captures how efficiently the input graph can be decomposed into terminal-free subgraphs along small separators in a tree-like manner. Using the correspondence between treewidth and treedepth, it follows that~$\twtfree(G,K) \leq \edkfree(G,K) \leq |S| + 1 \leq |K|$ where~$S$ is a minimum multiway cut. Our second main result shows that the resulting parameterization admits an algorithm whose running time and space usage are single-exponential.

\begin{restatable}{theorem}{terminaltwidth}
\label{thm:terminaltreewidth}
There is an algorithm that takes as input a graph~$G$ with weight function~$\cost \colon E(G) \to \mathbb{N}$, and a set~$K\subseteq V(G)$ of terminals, that computes a minimum-weight Steiner tree in~$2^{\Oh(\twtfree(G,K))}\poly(n)$ time and space.
\end{restatable}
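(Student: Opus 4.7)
}
The plan is to mimic the classical rank-based dynamic program for \textsc{Steiner Tree} on a tree decomposition, but carried out on a tree $K$-free decomposition of $G$ of width $k = \Oh(\twtfree(G,K))$. The first step is to actually obtain such a decomposition: I would appeal to the constructive results on $\hh$-treewidth referenced in the introduction (Jansen, de Kroon, and W{\l}odarczyk~\cite{jansen2021vertex} and the machinery of Eiben et al.~\cite{HtreewidthEIBEN202157}) applied to the class of $K$-free graphs, which should yield, in FPT time, a tree decomposition whose internal bags have size $\Oh(k)$ and whose \emph{leaf} bags $L_t$ are attached to the rest of the decomposition through a boundary $B_t \subseteq V(G)$ with $|B_t| = \Oh(k)$ such that $L_t \setminus B_t$ is $K$-free.

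The main body of the algorithm is then a bottom-up dynamic program over this decomposition, parametrized at each node $t$ with boundary $B_t$ by (i) a subset $X \subseteq B_t$ of boundary vertices that are used by the partial Steiner tree and (ii) a partition $\mathcal{P}$ of $X$ describing which boundary vertices are connected to each other inside the processed part of the graph. For internal nodes (introduce/forget/join), I would use the standard single-exponential \textsc{Steiner Tree} treewidth DP based on representative sets of weighted partitions: the rank-based approach of~\cite{cygan2022solving} shows that one can maintain a list of pairs $(\mathcal{P}, w)$ of size $2^{\Oh(k)}$ after each operation while preserving optimality.

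The genuinely new step is the computation of the table at a leaf bag $L_t$. Here the subgraph $G[L_t]$ can be arbitrarily large, but it contains no terminals except possibly those in $B_t \subseteq L_t$. For each subset $X \subseteq B_t$ and each partition $\mathcal{P}$ of $X$, the required entry is the minimum total weight of a forest inside $G[L_t]$ in which the vertices of each block of $\mathcal{P}$ lie in one common component and distinct blocks lie in distinct components. This is precisely a \textsc{Steiner Forest}/\textsc{Steiner Tree}-type subproblem in which the role of terminals is played by $B_t$, which has size $\Oh(k)$. I would invoke the Dreyfus--Wagner algorithm (or the Fomin et al.\ polynomial-space variant~\cite{fomin2019parameterized}) with $B_t$ as the terminal set, postprocessed in the Dreyfus--Wagner style so that for every subset $X$ one can read off, for every partition $\mathcal{P}$ of $X$ into up to $|X|$ ``trees meeting $X$'', the cheapest way to realize $\mathcal{P}$ inside $G[L_t]$. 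This costs $2^{\Oh(k)}\poly(n)$ per leaf bag, after which the table is immediately passed through the rank-based reduce operator so that only $2^{\Oh(k)}$ partitions survive before it is combined with the parent bag.

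The main obstacle I expect is the glueing of these two worlds: one must verify that the notion of ``partial solution'' used by the Dreyfus--Wagner DP on the leaf subgraph is exactly the one consumed by the rank-based treewidth DP on the rest of the decomposition, in particular that demanding the correct partition on the boundary $B_t$ is enough to guarantee that combining a cheap leaf forest with a cheap upper-level partial solution produces a globally optimal Steiner tree. This should follow from the same exchange argument as in the standard treewidth DP, using the $K$-freeness of $L_t \setminus B_t$ to argue that any optimal Steiner tree intersects $L_t$ in a forest whose only ``interface'' with the rest of the solution is through $B_t$, so that the partition it induces on $B_t$ is a sufficient summary. Combining all of this, each node is processed in $2^{\Oh(k)}\poly(n)$ time and space, yielding the claimed overall bound.
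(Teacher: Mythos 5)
Your high-level plan coincides with the paper's: obtain a tree $K$-free decomposition of width $\Oh(\twtfree(G,K))$, run the rank-based \textsc{Steiner Tree} DP of Bodlaender--Cygan--Kratsch--Nederlof on the interior, and handle each leaf bag by a Dreyfus--Wagner style computation followed by the reduce operator. However, there are two real gaps, one smaller and one serious.

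\textbf{Decomposition construction.} You say you would ``apply the $\mathcal{H}$-treewidth machinery to the class of $K$-free graphs.'' This does not go through directly: $K$-freeness is not a graph class in the usual sense, because it depends on which specific vertices of the input carry the terminal label, not on the isomorphism type of the induced subgraph. The existing $\mathcal{H}$-treewidth approximation algorithms are stated for hereditary graph classes. The paper handles this (Theorem~\ref{thm:findKfreedec}) by constructing an auxiliary graph $\hat G$ in which every edge is subdivided and every terminal is replaced by a triangle, so that triangle-freeness in $\hat G$ encodes $K$-freeness in $G$; it then invokes the $5$-approximation for $\twdeltafree$ and transfers the decomposition back. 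This is fixable along your lines, but is not an off-the-shelf invocation.

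\textbf{Leaf tables in $2^{\Oh(k)}$.} This is the central issue. You assert that after a Dreyfus--Wagner run with terminal set $B_t$ you can, in $2^{\Oh(k)}\poly(n)$ time, ``read off for every subset $X$ and every partition $\mathcal P$ of $X$ the cheapest way to realize $\mathcal P$ inside $G[L_t]$,'' and then reduce. But there are $|X|^{\Theta(|X|)} = k^{\Theta(k)}$ partitions of $X$, so explicitly building this full table already costs super-single-exponential time --- exactly the blow-up the theorem is supposed to avoid. Dreyfus--Wagner gives you $\mathsf{MST}[S]$ for each $S\subseteq B_t$, but this does not directly give the forest table in $2^{\Oh(k)}$ time: naively summing $\mathsf{MST}[P_i]$ over blocks $P_i$ is both too slow (it requires enumerating partitions) and incorrect (the block trees may overlap, in which case they merge into a single component and realize a different partition). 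The paper's Lemma~\ref{lem:leafcase} closes this gap by an \emph{incremental} construction: it orders the boundary $Z = \{z_1,\dots,z_k\}$, maintains a representative set $\mathcal R(Z_i)$ of size at most $2^{|Z|-1}$ of subgraphs of $G[Z_i\cup Y]$, and at step $i$ only considers unions $F \cup F_T$ where $F \in \mathcal R(Z_{i-1})$ and $F_T$ is a minimum Steiner tree for some $T \ni z_i$, reducing after each step. This keeps the work per step at $2^{\Oh(k)}$ and never materializes all $k^{\Oh(k)}$ partitions. Without some version of this idea, the ``immediately pass the table through reduce'' step in your proposal cannot be executed in the claimed time.
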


\cref{thm:terminaltreewidth} shows that optimal solutions to \textsc{Steiner Tree} can still be found efficiently in inputs that can be decomposed into terminal-free (but potentially dense and large) subgraphs along small separators. The single-exponential dependence on~$\twtfree(G,K)$ in the running time of our algorithm is optimal under the \emph{Exponential Time Hypothesis}~\cite{ImpagliazzoP01}, and matches the single-exponential running times of the current-best algorithms parameterized by the number of terminals~$|K|$~\cite{fomin2019parameterized,nederlof2013polyspace} or treewidth~\cite{BODLAENDER201586}. Hence \cref{thm:terminaltreewidth} shows that the generality of the refined parameter~$\twtfree$ does not incur a significant computational overhead.

\subparagraph*{Techniques}

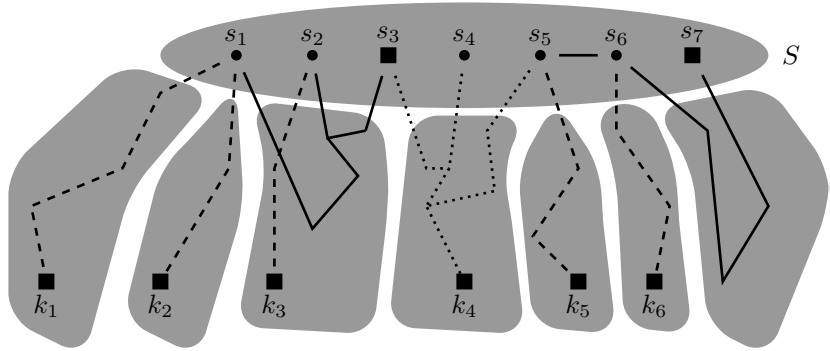
\begin{figure}[t]
    \centering
    \begin{tikzpicture} [scale = 1, 
    vertex/.style = {circle, draw, fill = black, align=center,minimum size = 3pt, inner sep = 0},
    terminal/.style = {rectangle, draw, fill = black, align=center,minimum size = 5pt, inner sep = 0}
,label distance=-2pt]

\tikzstyle{every path}=[line width=1pt]


\fill[fill = gray!80] (3,0) ellipse (4 and .7);

\fill[fill = gray!80, rounded corners = 10pt] (-3,-3.5) -- (-3,-1.5) -- (-1.5,-.1) -- (-.3,-.5) -- (-1.4,-1.8) -- (-2,-4) -- cycle;

\fill[fill = gray!80, rounded corners = 10pt] (-1.5,-3.5) -- (-1,-1.5) -- (-.5,-1) -- (0,-.4) -- (.1,-2) -- (-.5,-4) -- cycle;

\fill[fill = gray!80, rounded corners = 10pt] (0,-3.6) -- (.4,-1.5) -- (.2,-.6) -- (1.9,-.8) -- (2,-1.5) -- (2,-2) -- (1.8,-3.7) -- cycle;

\fill[fill = gray!80, rounded corners = 10pt] (2,-3.6) -- (2.3,-.8) -- (3.8,-.8) -- (3.5,-1.5) -- (3.6,-2) -- (3.8,-3.7) -- cycle;

\fill[fill = gray!80, rounded corners = 10pt] (3.9,-3.6) -- (3.7,-1.5) -- (4.2,-.6) -- (4.8,-1.5) -- (4.8,-2) -- (5,-3.7) -- cycle;

\fill[fill = gray!80, rounded corners = 10pt] (5.1,-3.6) -- (5,-1.6) -- (4.7,-.7) -- (5.4,-.6) -- (5.8,-1.5) -- (5.8,-2) -- (6,-3.7) -- cycle;

\fill[fill = gray!80, rounded corners = 10pt] (6.2,-3.6) -- (6,-1.5) -- (5.5,-.7) -- (7.,-.4) -- (7.8,-1.5) -- (7.8,-2) -- (7,-4) -- cycle;


\node[] at  (7.3,0) {$S$};

\node[vertex,label=above:$s_1$] (s_1) at  (0,0) {};
\node[vertex,label=above:$s_2$] (s_2) at  (1,0) {};
\node[terminal,label=above:$s_3$] (s_3) at  (2,0) {};
\node[vertex,label=above:$s_4$] (s_4) at  (3,0) {};
\node[vertex,label=above:$s_5$] (s_5) at  (4,0) {};
\node[vertex,label=above:$s_6$] (s_6) at  (5,0) {};
\node[terminal,label=above:$s_7$] (s_7) at  (6,0) {};

\node[terminal,label=below:$k_1$] (k_1) at  (-2.5,-3) {};
\node[terminal,label=below:$k_2$] (k_2) at  (-1,-3) {};
\node[terminal,label=below:$k_3$] (k_3) at  (.5,-3) {};
\node[terminal,label=below:$k_4$] (k_4) at  (3,-3) {};
\node[terminal,label=below:$k_5$] (k_5) at  (4.5,-3) {};
\node[terminal,label=below:$k_6$] (k_6) at  (5.5,-3) {};

\draw[dashed, shorten <= 5pt] (s_1) -- (-1,-.5) -- (-1.5,-1.5) -- (-2.7, -2) -- (k_1);
\draw[dashed, shorten <= 5pt] (s_1) -- (-.1,-1.5) -- (k_2);
\draw[dashed, shorten <= 5pt] (s_2) -- (.5,-1.5) -- (k_3);
\draw[dashed, shorten <= 5pt] (s_5) -- (4.5,-1.5) -- (3.9, -2.4) --  (k_5);
\draw[dashed, shorten <= 5pt] (s_6) -- (5,-1) -- (5.7,-2) -- (k_6);

\draw[dotted, shorten <= 5pt] (s_3) -- (2.5,-1.5) -- (2.8,-1.5);
\draw[dotted, shorten <= 5pt] (s_4) -- (2.8,-1.5) -- (2.5,-2);
\draw[dotted, shorten <= 5pt] (s_5) -- (3.3,-1) -- (3.4, -1.8) -- (2.5,-2);
\draw[dotted] (k_4) -- (2.5,-2);

\draw[shorten <= 5pt, shorten >= 5pt] (s_6) -- (6.2,-1) -- (6.4, -3) -- (7,-2) -- (s_7);
\draw[shorten <= 5pt, shorten >= 5pt] (s_6) -- (s_5);
\draw[shorten <= 5pt] (s_1) -- (1, -2.3) -- (1.6,-1.6) -- (1.2,-1.1) ;
\draw[shorten <= 5pt] (s_2) -- (1.2,-1.1) -- (1.7,-1);
\draw[shorten <= 5pt] (s_3) -- (1.7,-1);


\end{tikzpicture}
    \caption{The multiway cut $S= \{s_1,s_2,s_3,s_4,s_5,s_6,s_7\}$ ensures that no two terminals (squares) belong to the same connected component of $G-S$. The gray areas indicate the set $S$ and the components of $G-S$. The lines are a visual representation of a Steiner tree $F$ for the terminals, split at $S$, such that $F$ is partitioned into three trees of category~1 (solid trees, note the solid tree between $s_5$ and $s_6$), one tree of category~2 (dotted trees) and five paths of category~3 (dashed paths).}
    \label{fig:splitting}
\end{figure}

Both of our algorithms utilize a subroutine to solve \textsc{Steiner Tree} for a small set of terminal vertices. The manner in which these subroutine results are employed differs greatly between the two, however.

We first sketch the main idea behind~\cref{thm:polyspace}, which aims to find a minimum Steiner tree by utilizing a known multiway cut~$S$. We can partition the edges of any minimum Steiner tree~$F$ by splitting~$F$ at the vertices of the multiway cut~$S$ (see~\cref{fig:splitting}), such that any tree~$F'$ in the resulting partition intersects with at most one component of~$G-S$ and falls into one of the following categories: \begin{enumerate}
    \item~$F'$ is a minimum-weight Steiner tree for a subset of~$S$,
    \item~$F'$ is a minimum-weight Steiner tree for a subset of~$S$ and exactly one~$k\in K$, or 
    \item~$F'$ is a minimum-weight path from a terminal~$k \in K$ to some vertex of~$S$.
\end{enumerate}  

There can never be two or more vertices from~$K\setminus S$ in a single tree~$F'$, as these terminals live in different components of~$G-S$. 
Note that the paths of category 3 can be found quickly by computing a shortest path. Moreover, we prove that the Steiner trees of categories 1 and 2 are Steiner trees for a vertex set of size at most~$|S|+1$, hence we can compute the minimum weights of such Steiner trees in polynomial space with the algorithm by Fomin, Kaski, Lokshtanov, Panolan, and Saurabh~\cite{fomin2019parameterized}. 

We introduce the notion of~\emph{$S$-connecting system} to characterize how a Steiner tree~$F$ interacts with the set~$S$. Intuitively, an $S$-connecting system records which types of trees~$F'$ (in terms of the classification above) arise when splitting~$F$ at~$S$, and which vertices from~$S \cup K$ are connected by these trees. The algorithm will iterate over all~$2^{\Oh(|S|\log|S|)}$ distinct~$S$-connecting systems. For each such system, we can compute an edge-weighted bipartite graph~$B$ such that an optimal Steiner tree consistent with the $S$-connecting system corresponds to a minimum-weight maximum matching in~$B$. The edges of~$B$ correspond to the weights of optimal Steiner trees corresponding to either category 1 or 2. By exploiting the fact that each tree~$F'$ obtained by splitting an optimal tree~$F$ at~$S$ involves only few terminals, each bipartite graph~$B$ can be computed in single-exponential time.

For Theorem~\ref{thm:terminaltreewidth}, the algorithm builds upon the~$2^{\Oh(\tw(G))}\poly(n)$ time algorithm from Bodlaender, Cygan, Kratsch, and Nederlof~\cite{BODLAENDER201586}. Recall that any node~$x$ of a rooted tree decomposition can be associated with a subgraph~$G_x$, which contains all subgraphs of its children. Note that its bag~$\chi(x)\subseteq V(G)$ is the \emph{boundary} of~$G_x$, i.e., the only vertices of~$G_x$ that can have neighbors outside~$G_x$ belong to~$\chi(x)$. The algorithm goes over the tree decomposition, keeping track of possible \emph{partial solutions}, which are Steiner trees restricted to the subgraph~$G_x$. Instead of storing actual partial solutions, the algorithm stores how these partial solution are connected to the boundary in the form of partitions. In principle, this could lead to~$\tw(G)^{\Oh(\tw(G))}$ different partitions that would need to be stored as any bag~$\chi(x)$ is of size at most~$\tw(G) +1$ by definition. However, using the rank-based approach~\cite{BODLAENDER201586}, one only needs to keep a \emph{representative set} of partitions of size at most~$2^{\tw(G)}$; we sketch the main ideas in the following paragraph. 

To obtain an algorithm parameterized by~$\twtfree(G,K)$, for the parts of the decomposition corresponding to small separators we can use the same techniques as employed for standard treewidth. For the parts of the decomposition that are large, the decomposition ensures us that terminals can only lie on the boundary. Hence, there are only~$\twtfree(G,K) +1$ terminals in~$G_x$ for such nodes~$x$ and we can use the Dreyfus-Wagner algorithm~\cite{dreyfus1971steiner} to obtain a fixed-parameter tractable algorithm to compute partial solutions of~$G_x$. Since there might be~$\twtfree(G,K)^{\Oh(\twtfree(G,K))}$ partial solutions, this does not directly yield a single-exponential running time. To obtain Theorem~\ref{thm:terminaltreewidth}, we apply the rank-based approach for these nodes, by iteratively increasing the set of vertices of the boundary that are used by considered partial solutions. As far as we know, our algorithm is the first to incorporate advanced dynamic-programming ideas such as the rank-based approach with hybrid graph decompositions. To obtain a decomposition to which we can apply this scheme, we show that a recent FPT 5-approximation to compute $\mathcal{H}$-treewidth~\cite{JansenKW23} can be leveraged for $K$-free treewidth.

\subparagraph*{Organization}
We give the polynomial-space algorithm of Theorem~\ref{thm:polyspace} in Section~\ref{sec:polyspace}. In Section~\ref{sec:Kfreedecomp} we formally define~$K$-free treewidth and give an FPT 5-approximation. In Section~\ref{sec:singleexpalgorithm} we prove Theorem~\ref{thm:terminaltreewidth}, i.e., we give a single-exponential algorithm for \textsc{Steiner Tree} when parameterized by~$K$-free treewidth. We conclude in \cref{sec:conclusion}. 

We use standard terminology for graphs and parameterized algorithms. Terms not defined here can be found in a textbook~\cite{cygan2015parameterized}. We write~$[m]$ for~$\{1,...,m\}$ and define~$\min\{\emptyset\} = \infty$. 

\section{Polynomial-space algorithm parameterized by multiway cut}
\label{sec:polyspace}
In this section, we will consider \textsc{Steiner Tree} parameterized by the size of a given multiway cut~$S$ for the terminal set~$K$. In other words, each connected component of~$G-S$ contains at most one terminal. Note that $S$ can contain terminals. 

\subsection{$S$-connecting systems} \label{subsec:Sconnecting}

The following concept is the main focus of this section.

\begin{definition}[$S$-connecting system] \label{def:s:connecting} Consider a graph $G$ and $S \subseteq V(G)$. An \emph{$S$-connecting system} in~$G$ is a tuple $(\mathcal{S}, \mathcal{T})$, where $\mathcal{S} = \{S_1, S_2, \dots, S_{m}\}$ is a collection of subsets of $S$ and $\mathcal{T}$ is a tree, such that: 
\begin{enumerate}
    \item $V(\mathcal{T}) = S \cup \{u_1, \dots, u_{m}\}$, for $m = |\mathcal{S}|$; \label{cond:system:vtxset}
    \item for all $i \in [m]$ we have $S_i = N_{\mathcal{T}}(u_i) \subseteq S$ and $d_{\mathcal{T}}(u_i) > 1$; and\label{cond:system:t}
    \item for all distinct~$s, s' \in S$ it holds that if~$\{s,s'\} \in E(\mathcal{T})$, then~$\{s, s'\} \in E(G)$. \label{cond:edge:subgraph}
    \end{enumerate}
\end{definition}

We will use the following notion of self-reachable to describe that vertices are part of a common connected component.
\begin{definition}[Self-reachable] Consider a graph $G$ and $S \subseteq V(G)$. The set $S$ is \emph{self-reachable} in $G$ if $S$ is contained in a single connected component of $G$, i.e., when there is a path in~$G$ between any pair of vertices $s,s' \in S$.
\end{definition}

Intuitively, if~$S$ is a multiway cut for terminal set~$K$ in graph~$G$, and~$F$ is a Steiner tree for~$K$, then there is an $S$-connecting system~$\mathcal{S}$ that represents the interaction between the Steiner tree~$F$ and the multiway cut~$S$. The tree~$\mathcal{T}$ of this $S$-connecting system can be obtained from~$F$ by contracting each tree of~$F-S$ into a single vertex, while removing the degree-1 vertices of the resulting tree that do not belong to~$S$. \cref{def:s:connecting} ensures the corresponding set~$\mathcal{S}$ is uniquely determined by~$\mathcal{T}$. We say that the tree~$F$ \emph{realizes} the resulting $S$-connecting system~$(\mathcal{S}, \mathcal{T})$.

The following lemma follows from the definitions by a simple induction. It shows that when~$H$ is a subgraph of~$G$ in which each set~$S_i$ of an $S$-connecting system~$(\mathcal{S} = \{S_1, \ldots, S_m\}, \mathcal{T})$ is self-reachable, then the connectivity structure of the tree~$\mathcal{T}$ ensures that the entire set~$S$ is self-reachable.

\begin{restatable}{lemma}{lemConnectedWithT}  \label{lem:connectedwithT}
    Let $(\mathcal{S}, \mathcal{T})$ be an $S$-connecting system and let $H \subseteq G$ be a subgraph of $G$ such that for all $i \in [m]$, the set $S_i$ is self-reachable in $H$. Then $S$ is self-reachable in $H' = H \cup \mathcal{T}[S]$.
\end{restatable}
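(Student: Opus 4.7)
The plan is to argue directly that any two vertices $s, s' \in S$ lie in the same connected component of $H' = H \cup \mathcal{T}[S]$. First I would exploit the fact that $\mathcal{T}$ is a tree to obtain the unique $s$-$s'$ path $P$ in $\mathcal{T}$. Traversing $P$ from $s$ to $s'$, every internal vertex is either a vertex of $S$ or an auxiliary vertex $u_i$ for some $i \in [m]$. I would think of $P$ as being chopped into consecutive pieces, where each piece is either (a) a single edge between two vertices of $S$, or (b) a length-two subpath $s_a\,u_i\,s_b$ that passes through an auxiliary vertex $u_i$.

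Next I would show each type of piece can be replaced by a walk in $H'$. For a piece of type (a), condition~\ref{cond:edge:subgraph} of \cref{def:s:connecting} guarantees that the edge $\{s,s'\}$ lies in $E(G)$, and since both endpoints are in $S$ this edge belongs to $\mathcal{T}[S]$ and hence to $H'$. For a piece of type (b), condition~\ref{cond:system:t} gives $s_a, s_b \in N_{\mathcal{T}}(u_i) = S_i$, and the hypothesis that $S_i$ is self-reachable in $H$ produces a path from $s_a$ to $s_b$ in $H \subseteq H'$. Concatenating these replacement walks in order yields a walk in $H'$ from $s$ to $s'$, proving that $s$ and $s'$ lie in the same component of $H'$. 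Since $s,s'$ were arbitrary, $S$ is self-reachable in $H'$.

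The proof really is just an unpacking of definitions once one notices the tree-path structure of $\mathcal{T}$, so there is no serious obstacle. The only subtlety to get right is the bookkeeping when $u_i$ appears as an internal vertex of $P$: one must use condition~\ref{cond:system:t} to conclude that its two neighbors along $P$ both lie in $S_i$ (so that the self-reachability hypothesis applies). An alternative presentation would be to proceed by induction on $m$, removing a leaf $u_i$ of $\mathcal{T}$ (which exists whenever $m \geq 1$ by the degree condition $d_{\mathcal{T}}(u_i) > 1$ combined with the fact that $\mathcal{T}$ is a tree; more carefully, pick any leaf of $\mathcal{T}$, which must belong to $S$, and iterate), but the direct walk-construction above is cleaner and avoids any delicate case analysis.
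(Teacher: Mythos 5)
Your argument is correct and takes essentially the same approach as the paper: both decompose the unique $s$--$s'$ path in $\mathcal{T}$ into pieces that are either a single $S$--$S$ edge (realized directly in $\mathcal{T}[S]$) or an $s_a$--$u_i$--$s_b$ hop through an auxiliary vertex (realized via a path in $H$ supplied by the self-reachability of $S_i = N_{\mathcal{T}}(u_i)$). The paper merely packages the same path decomposition as an induction on $\mathcal{T}$-distance, peeling off the last one or two edges of the path, whereas you build the full walk in $H'$ at once; the two presentations are interchangeable.
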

\begin{proof}
    We prove that any pair~$\{s,s'\}$ of vertices from~$S$ is self-reachable in~$H$, by induction on the distance from~$s$ to~$s'$ in~$\mathcal{T}$. Note that the base case is evidently true, as for distance $0$ we have $s = s'$.  Consider an arbitrary pair $\{s,s'\}$ and let~$\ell$ be the distance between them in~$\mathcal{T}$; the induction hypothesis is that any pair from~$S$ whose distance in~$\mathcal{T}$ is less than~$\ell$, is self-reachable in~$H'$.    
    
    Let $s = x_0, x_1, \dots, x_{\ell} = s'$ be a path in $\mathcal{T}$ from $s$ to $s'$.
    If $x_{\ell-1} \in S$, we find that by induction that $\{s,x_{\ell-1}\}$ is self-reachable in $H'$.
    Moreover, $\{x_{\ell-1},s'\}\in E(\mathcal{T}[S])$ so $\{s'',s'\}\in E(H')$ and we find that $\{x_{\ell-1},s'\}$ is self-reachable in $H'$.
    Hence, $\{s,s'\}$ is self-reachable in $H'$. 
    
    If $x_{\ell-1} \not\in S$, then by definition of $S$-connecting system, we see that $x_{\ell-2}\in S$ as $x_{\ell-1}$ can then only contain vertices from $S$ as its neighbors. Moreover, since we have $x_{\ell-2}, s' \in N(x_{\ell-1})$ we find that $x_{\ell-2}, s' \in S_i$ for some $i \in [m]$. By assumption, $S_i$ is self-reachable in $H'$, so in particular $\{x_{\ell-2},s'\}$ is self-reachable in $H'$. 
    Again by induction we find that $\{s,x_{\ell-2}\}$ is self-reachable in $H'$. Hence, $\{s,s'\}$ is self-reachable in $H'$. 
\end{proof}

Next, we will prove that there are at most $2^{\Oh(|S|\log|S|)}$ different $S$-connecting systems. For this, we first prove the following claim, bounding the size of $\mathcal{S}$.

\begin{lemma}
    \label{lemma:boundedm} 
    Let~$G$ be a graph and~$S\subseteq V(G)$. Any $S$-connecting system $(\mathcal{S},\mathcal{T})$ in~$G$ satisfies~$|\mathcal{S}| \leq |S| - 1$. 
\end{lemma}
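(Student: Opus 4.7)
The plan is a straightforward counting argument based on the tree structure of~$\mathcal{T}$ together with the degree constraint on the auxiliary vertices~$u_i$. First, I would observe that since $\mathcal{T}$ is a tree on $|V(\mathcal{T})| = |S| + m$ vertices (by condition~\ref{cond:system:vtxset} of \cref{def:s:connecting}), it has exactly $|S| + m - 1$ edges.

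Next, the key structural observation is that the set $U = \{u_1, \ldots, u_m\}$ is independent in $\mathcal{T}$: condition~\ref{cond:system:t} requires $N_{\mathcal{T}}(u_i) = S_i \subseteq S$, so no $u_i$ has another $u_j$ as a neighbor. Consequently the edge set of~$\mathcal{T}$ partitions into the edges of $\mathcal{T}[S]$ and the edges with exactly one endpoint in~$U$, and the count of the latter class of edges is exactly $\sum_{i=1}^{m} d_{\mathcal{T}}(u_i)$ (no double counting, since $U$ is independent).

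Now I would plug in the lower bound on the degrees: condition~\ref{cond:system:t} also requires $d_{\mathcal{T}}(u_i) > 1$, i.e.\ $d_{\mathcal{T}}(u_i) \geq 2$, so $\sum_{i=1}^{m} d_{\mathcal{T}}(u_i) \geq 2m$. Combining with the previous paragraph,
\[
  |S| + m - 1 \;=\; |E(\mathcal{T})| \;=\; |E(\mathcal{T}[S])| + \sum_{i=1}^{m} d_{\mathcal{T}}(u_i) \;\geq\; 0 + 2m,
\]
which rearranges to $m \leq |S| - 1$, as desired.

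I do not anticipate any real obstacle here; the only subtle point is justifying that the edges incident to $U$ are counted exactly once by $\sum_i d_{\mathcal{T}}(u_i)$, which follows immediately from the independence of~$U$ in $\mathcal{T}$ noted above. Everything else is a direct application of the definition of an $S$-connecting system and the fact that a tree on $N$ vertices has $N-1$ edges.
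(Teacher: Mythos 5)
Your proof is correct. It takes a genuinely different (though closely related) route from the paper: the paper roots~$\mathcal{T}$ at an arbitrary $s^* \in S$ and observes that each $u_i$, being a non-leaf with all neighbors in $S$, must be the parent of at least one $s \in S \setminus \{s^*\}$, so injectivity of the child-to-parent map gives $m \leq |S|-1$. You instead avoid rooting entirely and count edges: $|E(\mathcal{T})| = |S| + m - 1$ since $\mathcal{T}$ is a tree, and the independence of $\{u_1,\ldots,u_m\}$ (all their neighbors lie in $S$) together with $d_{\mathcal{T}}(u_i) \geq 2$ forces at least $2m$ edges incident to those vertices, with no double counting. Both arguments exploit exactly the same two structural facts from conditions~\ref{cond:system:vtxset} and~\ref{cond:system:t} of \cref{def:s:connecting}; the paper's version is marginally shorter, while yours is more symmetric and dispenses with the arbitrary choice of root. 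Either is a perfectly good proof.
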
 
\begin{proof}
Let~$(\mathcal{S}, \mathcal{T})$ be an $S$-connecting system. Let~$\mathcal{S} = \{S_1, \dots, S_m\}$; we prove that~$m \leq |S| - 1$. 
Let $V(\mathcal{T}) = S \cup \{u_1, \dots, u_{m}\}$ such that $N_{\mathcal{T}}(u_i) = S_i$ for all $i \in [m]$. 
Root $\mathcal{T}$ at an arbitrary vertex ~$s^* \in S$. Note that for any $i \in [m]$, $u_i$ is not a leaf of $\mathcal{T}$ and has only neighbors in $S$ by definition of the $S$-connecting system. Hence, every $u_i$ is a parent of at least one $s \in S \setminus \{s^*\}$. Since every $s \in S$ has at most one parent we find $m \le |S| -1$. 
\end{proof}

\begin{lemma}\label{lemma:nrScon}
For any graph $G$ and vertex set $S \subseteq V(G)$, there are at most $2^{\Oh(|S|\log|S|)}$ different $S$-connecting systems.
\end{lemma}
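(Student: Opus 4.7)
The plan is to reduce counting $S$-connecting systems to counting labeled trees on at most $2|S|-1$ vertices, and then apply a standard enumeration bound such as Cayley's formula.

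First I would observe that the collection $\mathcal{S}$ is fully determined by the tree $\mathcal{T}$: by condition~\ref{cond:system:t} of \cref{def:s:connecting}, each set $S_i \in \mathcal{S}$ equals $N_{\mathcal{T}}(u_i)$, so once the labeled tree $\mathcal{T}$ on vertex set $S \cup \{u_1, \ldots, u_m\}$ is fixed, the tuple $(\mathcal{S}, \mathcal{T})$ is determined. Hence the number of $S$-connecting systems is upper-bounded by the number of pairs $(m, \mathcal{T})$ where $m \in \{0, 1, \ldots, |S|-1\}$ (the bound comes from \cref{lemma:boundedm}) and $\mathcal{T}$ is a tree on the labeled vertex set $S \cup \{u_1, \ldots, u_m\}$, which has size $|S| + m \le 2|S|-1$.

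Next I would bound the number of trees on a fixed labeled vertex set of size $N \le 2|S|-1$. By Cayley's formula there are exactly $N^{N-2}$ labeled trees on $N$ vertices, so in particular there are at most $(2|S|-1)^{2|S|-3}$ possibilities for $\mathcal{T}$. Summing over the at most $|S|$ choices for $m$ gives the overall bound
\[
|S| \cdot (2|S|-1)^{2|S|-3} \;=\; 2^{\Oh(|S| \log |S|)},
\]
which yields the statement of the lemma. Note that we are slightly overcounting (different relabelings of $u_1, \ldots, u_m$ may give the same pair $(\mathcal{S}, \mathcal{T})$ up to relabeling, and some pairs violate conditions~\ref{cond:system:t} or~\ref{cond:edge:subgraph} of \cref{def:s:connecting}), but this is harmless for an upper bound.

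I do not anticipate any real obstacle: the statement is purely combinatorial, the nontrivial structural bound $m \le |S|-1$ has already been established in \cref{lemma:boundedm}, and the only remaining ingredient is the classical enumeration of labeled trees. If one wanted to avoid invoking Cayley's formula explicitly, an equally good bound follows from encoding a tree on $N$ labeled vertices by a Pr\"ufer sequence of length $N-2$, giving the same estimate $N^{N-2} = 2^{\Oh(|S|\log |S|)}$.
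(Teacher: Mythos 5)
Your proof is correct and follows essentially the same approach as the paper: bound $|V(\mathcal{T})| \le 2|S|-1$ via \cref{lemma:boundedm}, invoke Cayley's formula (or Pr\"ufer sequences) to count labeled trees, and note that $\mathcal{S}$ is determined by $\mathcal{T}$. The only cosmetic difference is that you sum over the number $m$ of auxiliary vertices rather than over the tree size, which yields the same $2^{\Oh(|S|\log|S|)}$ bound.
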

\begin{proof}
Consider an arbitrary~$S$-connecting system~$(\mathcal{S}, \mathcal{T})$ in~$G$. \cref{cond:system:vtxset} of \cref{def:s:connecting} ensures that~$|V(\mathcal{T})| \leq |S| + |\mathcal{S}|$, while \cref{lemma:boundedm} ensures~$|\mathcal{S}| \leq |S|-1$. Hence $|V(\mathcal{T})| \le 2|S| -1$. By Cayley's formula, there are $n^{n-2}$ different labeled trees on $n$ vertices. Therefore, the number of different choices for~$\mathcal{T}$ is bounded by 
$\sum_{i=1}^{2|S|-1}i^{i-2} \le (2|S|-1) \left(2|S| -1\right)^{2|S| -3}$, i.e., by $2^{\Oh(|S| \log |S|)}$.
By \cref{def:s:connecting}, the collection~$\mathcal{S}$ is uniquely determined by~$\mathcal{T}$. 
\end{proof}

\subsection{The algorithm} \label{subsec:Algorithm}

Before we present \cref{thm:polyspace}, we describe at a high level how $S$-connecting systems facilitate a reduction to bipartite matching. The starting observation is that, by trying all possible subsets of the multiway cut~$S$, we may assume that the Steiner tree we are looking for contains all vertices of~$S$. The connectivity pattern of each Steiner tree with respect to~$S$ can then be summarized by an $S$-connecting system. To assemble a Steiner tree that realizes a given $S$-connecting system, a na\"ive approach is the following: pick a shortest path from each terminal to~$S$, and for each subset~$S_i$ of the $S$-connecting system, pick a subtree containing $S_i$ to make~$S_i$ self-reachable. This approach leads to some redundancy: we might be able to re-use some edges if the path used to connect a terminal~$t_p$ to~$S$, shares some edges with a subtree that makes a subset~$S_i$ self-reachable. 

Due to~$S$ being a multiway cut, a tree of~$G-S$ that makes a subset~$S_i$ self-reachable can live in only one component of~$G-S$, and can therefore only involve at most one terminal. For each choice of terminal~$t_p$ and subset~$S_i$, we can use the polynomial-space FPT algorithm for \textsc{Steiner Tree} parameterized by~$|K$|~\cite{fomin2019parameterized} to compute the cost of a tree making~$\{t_p\} \cup S_i$ self-reachable, and compare it to the shortest-path distance between $t_p$ and the closest vertex of $S$ to see how much we would benefit from combining the task of making~$t_p$ reachable from~$S$ with the task of making~$S_i$ self-reachable. As each terminal can only be involved in making one set $S_i$ self-reachable, we now see a weighted bipartite matching problem appear: we are looking for a pairing of sets~$S_i$ with terminals~$t_p$ so that the overall \emph{savings}, compared to making each terminal individually reachable from $S$ by a shortest path and separately adding a subtree making each~$S_i$ self-reachable into the Steiner tree, are as large as possible.

These ideas are formalized in the proof of the following theorem. 
\multiway*

\begin{proof}

Consider the multiway cut~$S$. Let~$C_1, \ldots, C_q$ denote the vertex sets of the connected components of~$G-S$, so that each~$C_p$ contains at most one terminal. For a vertex set~$X$, we denote by~$\delta(X)$ the set of edges of~$G$ that have exactly one endpoint inside~$X$. Hence for~$p \in [q]$, each edge of~$\delta(C_p)$ has one endpoint in~$C_p$ and one endpoint in~$S$. We define $k_p := C_p \cap K$ for all~$p \in [q]$. 

In our algorithm, we will guess  which vertices $S' \subseteq S$ will be used by the Steiner tree. 
For each such guess for $S'$, we will consider all possible $S'$-connecting systems $(\mathcal{S},\mathcal{T})$. For each such system, we create a weighted complete bipartite graph $B$ with partition $V(B) = \{\mathcal{S}\} \cup \{ P \}$ where
$P = \{ (p,j) \colon p \in \{1,\dots,q\}, j \in \{0,\dots,|\mathcal{S}|\}\}$.
For each $p \in [q]$ we define $G_p$ as $G[C_p] \cup \delta(C_p)$, i.e., the graph induced on $C_p$ together with the edges between $C_p$ and $S$. 
The goal is to find a minimum weight maximum matching in $B$, which represents how each set $S_i \in \mathcal{S}$ is self-reachable. If $S_i$ is matched with $(p,0)$, it indicates that the subtree making $S_i$ self-reachable is contained in $G_p$, and that $k_p$ is contained in this subtree. If $S_i$ is matched with $(p,j)$ for $j>0$, it indicates that the subtree making $S_i$ self-reachable is contained in $G_p$ (not necessarily using terminal $k_p$ if it exists). Note that the specific value of $j$ does not carry any meaning, but we need to be able to use $G_p$ multiple (and at most $|\mathcal{S}|$) times to make different sets in $\mathcal{S}$ self-reachable. 

To determine the weights of $B$, we have to solve several  \textsc{Steiner Tree} problems in succession. Let $\mathsf{MST}[H,X]$ denote an arbitrary minimum-cost Steiner tree for terminal set $X$ in graph $H$ and let $\mathsf{MST}[H,X] = \emptyset$ if no such Steiner tree exists. 
Moreover, for $X \subseteq V(G)$ and 
$k \in V(G)$ we refer to $\mathsf{SP}[X,k]$ as an arbitrary shortest path from $k$ to some vertex of $X$ and we set $\mathsf{SP}[X,k] = \emptyset$ is no such path exists. We extend this notation for the sets~$k_p$ defined above, which are either singletons or empty. If~$k_p = \{t_p\}$ is a singleton set containing a terminal, we let~$\mathsf{SP}[X,k_p] = \mathsf{SP}[X, t_p]$. If~$k_p = \emptyset$, then~$\mathsf{SP}[X,k_p] = \emptyset$. 

Some of the edges of $B$ can have infinite weight, indicating that a certain Steiner tree does not exist.  We define a cost function~$\omega$ for the edges of~$B$ as follows. For all $S_i \in \mathcal{S}$ and $p \in [q]$ we set 
\begin{align*}
    \omega(S_i, (p,0)) &= \begin{cases}
    \infty &\text{ if $\mathsf{SP}[S',k_p] = \emptyset$},\\
\cost (\mathsf{MST}[G_p,S_i\cup \{k_p\}]) - \cost(\mathsf{SP}[S',k_p]) &\text{ else if $C_p \cap K \not = \emptyset$},\\
\infty & \text{ otherwise,}
    \end{cases}
\end{align*}
and for all $S_i \in \mathcal{S}$, $p \in [q]$, and $j \in [|\mathcal{S}|]$ we set
\begin{align*}
    \omega(S_i, (p,j)) &= \cost(\mathsf{MST}[G_p,S_i]).
\end{align*}

Note that each of these values can be computed in polynomial space and $2^{\Oh(|S'|)}\poly(n)$ time using the algorithm by Fomin, Kaski, Lokshtanov, Panolan, and Saurabh~\cite{fomin2019parameterized}. Since the bipartite graph $B$ is of polynomial size, we can construct $B$ in polynomial space and $2^{\Oh(|S'|)}\poly(n)$ time.
We claim that we can reconstruct a Steiner tree, based on a minimum weight maximum matching of $B$, if its weight is finite.

\begin{claim}
    \label{claim:MtoT}
    Consider $S' \subseteq S$ such that $K \cap S \subseteq S'$, with an $S'$-connecting system $(\mathcal{S},\mathcal{T})$. Let~$M$ be a minimum weight maximum matching of $B$ of finite weight. Then a Steiner tree $T_M$ for terminals $K$ can be constructed in $\poly(n)$ time with \[\cost(T_M) \le \omega(M) + \cost(\mathcal{T}[S']) + \sum_{k \in K} \cost( \mathsf{SP}[S', k]).\] 
\end{claim}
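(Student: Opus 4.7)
My plan is to construct $T_M$ explicitly as a union of four kinds of edge-sets, verify that the result is a connected subgraph containing all of $K$, and then tally the costs. Concretely, let $T_M$ be a spanning tree of the connected component (containing $K$) of the subgraph whose edges are: (i) the edges of $\mathcal{T}[S']$; (ii) for every matching edge $(S_i,(p,0)) \in M$, the edges of $\mathsf{MST}[G_p, S_i \cup \{k_p\}]$; (iii) for every matching edge $(S_i,(p,j)) \in M$ with $j > 0$, the edges of $\mathsf{MST}[G_p, S_i]$; and (iv) for every terminal $k \in K$ that is not the terminal of a component $C_p$ appearing in a type-(ii) matching edge, the edges of $\mathsf{SP}[S', k]$. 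Because $\omega(M)$ is finite, every $S_i$ is matched by an edge of finite weight, so every piece is a nonempty subgraph of $G$ and the union is well defined.

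The main technical step is showing that all terminals lie in a single connected component of this union, and this is where the structural lemma pays off. For each $S_i \in \mathcal{S}$, the subtree of type (ii) or (iii) associated with the matching partner of $S_i$ makes $S_i$ self-reachable in $T_M$. Applying \cref{lem:connectedwithT} to the $S'$-connecting system $(\mathcal{S},\mathcal{T})$ with $H$ the union of the type-(ii) and type-(iii) pieces, and noting that $T_M$ also contains $\mathcal{T}[S']$, we conclude that all of $S'$ lies in a single connected component of $T_M$. A terminal in $K \cap S$ is then covered since $K \cap S \subseteq S'$; a terminal outside $S$ is the unique terminal $k_p$ of some component $C_p$ and is attached to $S'$ either by the retained shortest path from (iv) or by the tree $\mathsf{MST}[G_p, S_i \cup \{k_p\}]$ from (ii). Hence the spanning-tree step is justified.

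The cost bound is then a routine algebraic check. By definition of $\omega$, a type-(ii) matching edge contributes $\cost(\mathsf{MST}[G_p, S_i \cup \{k_p\}]) = \omega(S_i,(p,0)) + \cost(\mathsf{SP}[S',k_p])$ and a type-(iii) matching edge contributes $\omega(S_i,(p,j))$; summed over $M$, these give $\omega(M) + \sum_{(S_i,(p,0)) \in M} \cost(\mathsf{SP}[S',k_p])$. Adding the contribution $\cost(\mathcal{T}[S'])$ from (i) and the contribution $\sum_{k \in K}\cost(\mathsf{SP}[S',k]) - \sum_{(S_i,(p,0)) \in M}\cost(\mathsf{SP}[S',k_p])$ from (iv), the $\cost(\mathsf{SP}[S',k_p])$ terms cancel, leaving exactly $\omega(M) + \cost(\mathcal{T}[S']) + \sum_{k\in K}\cost(\mathsf{SP}[S',k])$, which upper-bounds the cost of the spanning tree $T_M$. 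Polynomial running time is immediate, since the matching is of polynomial size and the four pieces have already been computed when constructing $B$. The step I expect to be the main obstacle is the connectivity verification: one has to be sure that gluing Steiner trees living inside the individual components of $G - S$ together via the skeleton $\mathcal{T}[S']$ genuinely reconnects $S'$, and this is precisely what \cref{lem:connectedwithT} supplies; without that combinatorial fact the cost accounting alone would not suffice.
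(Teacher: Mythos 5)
Your proposal is correct and follows essentially the same route as the paper: the same union of $\mathcal{T}[S']$, Steiner-tree pieces for matched $S_i$, and shortest-path pieces for leftover terminals; the same application of \cref{lem:connectedwithT} to establish that $S'$ is self-reachable; and the same cost bookkeeping with the shortest-path terms cancelling. The one small imprecision is that you attribute the fact that every $S_i$ is matched to $\omega(M)$ being finite, whereas it actually follows from $M$ being a \emph{maximum} matching in a complete bipartite graph with $|\mathcal{S}|<|P|$ (finiteness of $\omega(M)$ only guarantees that those matched edges have finite weight) — but this does not affect the argument.
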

\begin{claimproof}
We let 
\begin{align*}
    T_M =& \mathcal{T}[S'] \cup  \{ \mathsf{MST}[G_p,S_i \cup \{k_p\}] \colon (S_i,(p,0)) \in M \} \\
    & \cup \{ \mathsf{MST}[G_p,S_i] \colon (S_i,(p,j)) \in M \text{ and } j >0 \} \\
    &\cup \{\mathsf{SP}[S',k_p] \colon k_p \in K\setminus S \text{ and } (k_p,0) \text{ not incident to }M\}.
\end{align*}

Clearly we have that $\cost(T_M) \le \omega(M) + \cost(\mathcal{T}[S']) + \sum_{k \in K} \cost( \mathsf{SP}[S', k])$. Moreover, $T_M$ can easily be computed based on $M$ in polynomial time. We now prove that $K\cup S'$ is self-reachable in $T_M$.

First we prove that every $S_i \in \mathcal{S}$ is self-reachable in $T_M$. Since $M$ is a maximum matching of finite weight and $B$ is a complete bipartite graph with $|\mathcal{S}| < |P|$, we find that all of $\mathcal{S}$ is matched in $M$ and each edge of $M$ has a finite weight. If $(S_i, (p,0)) \in M$, then all of $S_i \cup \{k_p\}$ is self-reachable in $T_M$ by $\mathsf{MST}[G_p,S_i\cup \{k_p\}]$. Else, we have $(S_i, (p,j)) \in M$ for some $j>0$, then all of $S_i$ is self-reachable in $T_M$ by $\mathsf{MST}[G_p,S_i]$. 
We can then use Lemma~\ref{lem:connectedwithT} to infer that $S'$ is self-reachable in $T_M$.  

Now we prove that each terminal $k \in K$ is reachable from some vertex in $S'$ in $T_M$. If $k \in S'$, then this is clearly true. Otherwise we have $k = k_p$ for some $p \in \{1,\dots,q\}$. If $(S_i,(p,0))\in M$ for some $S_i \in \mathcal{S}$, then $k_p$ is reachable from $S'$ through $\mathsf{MST}[G_p,S_i\cup \{k_p\}]$. Else, we have that $\mathsf{SP}[S',k_p]$ was added to $T_M$, which is a path from $k_p$ to some vertex of $S'$. Hence, we can conclude that~$K \cup S'$ is self-reachable in~$T_M$. 

Since~$T_M$ has a connected component~$T'$ containing~$S' \cup K$ and has total cost at most the claimed amount, while the weight of any edge is non-negative, any spanning tree for~$T'$ forms a Steiner tree for~$K$ whose cost is at most the claimed amount. Since we can compute a spanning tree in linear time, the claim follows.
\end{claimproof}

We are now ready to present our algorithm. For all $S' \subseteq S$ such that $S\cap K\subseteq S'$, for all $S'$-connecting systems~$(\mathcal{S}, \mathcal{T})$, construct the weighted complete bipartite graph $B$ as above. Then compute a minimum weight maximum matching $M$ of $B$. Using Claim~\ref{claim:MtoT} we construct a Steiner tree if the weight of $B$ is finite in polynomial time. Finally, the algorithm outputs the minimum-weight Steiner tree found during this process. 

The algorithm runs in $2^{\Oh(|S|\log|S|)}\poly(n)$ time and polynomial space: we consider $2^{|S|}$ different sets $S'$, for each there are at most $2^{\Oh(|S'|\log|S'|)}$ different $S'$-connecting systems by Lemma~\ref{lemma:nrScon}. Computing the weights of $B$ then takes $2^{\Oh(|S'|)}\poly(n)$ time and polynomial space using the algorithm by Fomin, Kaski, Lokshtanov, Panolan, and Saurabh~\cite{fomin2019parameterized}. Finding a minimum-cost maximum matching of $B$ takes $\poly(n)$ time and space. Constructing a Steiner tree based on $B$ takes only polynomial time and space. It remains to prove that the output of the algorithm is a minimum Steiner tree.

\begin{claim} \label{lem:polySpaceOPT}
The algorithm outputs a minimum-weight Steiner tree.
\end{claim}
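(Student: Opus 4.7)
Let $F^*$ be an arbitrary minimum-weight Steiner tree with $\mathrm{OPT} := \cost(F^*)$. I would exhibit one iteration of the outer loops whose resulting tree $T_M$ satisfies $\cost(T_M) \le \mathrm{OPT}$; combined with Claim~\ref{claim:MtoT}, this forces the overall output to be a minimum-weight Steiner tree. Take $S' := V(F^*) \cap S$, which contains $K \cap S$ since $F^*$ spans $K$, and let $(\mathcal{S},\mathcal{T})$ be the $S'$-connecting system realized by $F^*$; both are enumerated by the algorithm. Splitting the edges of $F^*$ at the vertices of $S'$ as in Figure~\ref{fig:splitting} decomposes $F^*$ into (i) $E(\mathcal{T}[S'])$, (ii) subtrees $T_1^*,\dots,T_m^*$ of categories~1 and~2, where each $T_i^*$ meets $S'$ precisely in the set $S_i$ (of size at least $2$) and otherwise lies in a single component $G_{p_i}$ of $G-S$, possibly containing the lone terminal $k_{p_i} \in C_{p_i}$, and (iii) category-3 paths $P_p^*$ joining the remaining non-absorbed terminals $k_p \in K\setminus S$ to $S'$ through $G_p$.

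\textbf{A maximum matching witnessed by $F^*$.} Construct $M^*\subseteq E(B)$ by matching $S_i$ to $(p_i,0)$ if $T_i^*$ uses $k_{p_i}$, and otherwise to $(p_i,j)$ for a fresh $j \in [|\mathcal{S}|]$. Because every terminal belongs to at most one $T_i^*$, each slot $(p,0)$ is used at most once; the $|\mathcal{S}|$ slots $(p,j)$ with $j>0$ easily accommodate the remaining $S_i$'s per component. Hence $M^*$ saturates $\mathcal{S}$ and, since $|\mathcal{S}|\le |P|$, is a maximum matching. Since each $T_i^* \subseteq G_{p_i}$ is itself a Steiner tree for $S_i$ (and for $S_i\cup\{k_{p_i}\}$ in the first case), every edge of $M^*$ has finite weight, with $\omega(S_i,(p_i,0)) \le \cost(T_i^*) - \cost(\mathsf{SP}[S',k_{p_i}])$ and $\omega(S_i,(p_i,j)) \le \cost(T_i^*)$ for $j>0$. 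Letting $Q$ denote the set of indices $p$ for which a slot $(p,0)$ is used, summation gives
\[
\omega(M^*) \;\le\; \sum_{i=1}^{m} \cost(T_i^*) \;-\; \sum_{p\in Q} \cost(\mathsf{SP}[S',k_p]).
\]

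\textbf{Putting the bound together.} The same edge-partition of $F^*$ also yields $\mathrm{OPT} = \cost(\mathcal{T}[S']) + \sum_i \cost(T_i^*) + \sum_{p \notin Q,\, k_p \in K \setminus S} \cost(P_p^*)$. Each $P_p^*$ is a path from $k_p$ to $S'$ in $G$, so by shortest-path optimality $\cost(\mathsf{SP}[S',k_p]) \le \cost(P_p^*)$, while for $k \in K \cap S'$ one has $\cost(\mathsf{SP}[S',k]) = 0$. Plugging these estimates into the right-hand side of Claim~\ref{claim:MtoT}, the shortest-path contributions for $p \in Q$ cancel against the subtraction inside the bound on $\omega(M^*)$, and the remaining shortest-path contributions are dominated term-by-term by the category-3 path costs of $\mathrm{OPT}$. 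Since $M$ is a minimum-weight maximum matching and $M^*$ is a maximum matching of finite weight, $\omega(M) \le \omega(M^*)$, so Claim~\ref{claim:MtoT} gives $\cost(T_M) \le \mathrm{OPT}$ as required. The main delicacy is the bookkeeping that juggles three different shortest-path contributions: the `savings' encoded in the $\omega$-weights of slots $(p,0)$, the per-terminal shortest paths paid in the Claim~\ref{claim:MtoT} reconstruction, and the category-3 paths of $F^*$; one must verify that exactly the absorbed terminals' shortest-path costs cancel and that the unabsorbed ones line up with the $P_p^*$ of the optimum.
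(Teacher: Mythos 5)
Your proof is correct and follows essentially the same strategy as the paper's: fix an optimal tree $F^*$, recover the $S'$-connecting system it realizes, build a finite-weight maximum matching from the split components of $F^*$, bound its weight using the definition of $\omega$ and the edge-decomposition of $F^*$, and conclude via Claim~\ref{claim:MtoT}. The only cosmetic differences are that the paper picks the second coordinate $j=i$ to avoid the ``fresh index'' bookkeeping, and it formally re-verifies that the constructed $(\mathcal{S},\mathcal{T})$ satisfies Definition~\ref{def:s:connecting} rather than appealing to the ``realizes'' shorthand.
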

\begin{claimproof}
    Given a minimum-weight Steiner tree $F$, we prove that the algorithm finds a Steiner tree of weight at most $\cost(F)$. 
    We construct an $S'$-connecting system based on the choices made by~$F$, and show that in the iteration where the algorithm considered this $S'$-connecting system, it found a Steiner tree whose cost is not larger than that of~$F$ and therefore optimal
    Let $S' \subseteq S$ be the set of vertices of $S$ used by $F$. Because $F$ is a Steiner tree for terminals $K$, we find $S\cap K \subseteq S'$. 
    
    Starting from the graph $F$, contract each connected component of $F-S$ into a single vertex. Then, remove any leaves of the resulting graph that were connected components of $F-S$ before and let the resulting graph be $\mathcal{T}$. 
    Denote the contracted vertices still in $\mathcal{T}$ by $u_1,\dots,u_m$ and let $S_i = N_\mathcal{T}(u_i)$ for all $ i \in \{1,\dots,m\}$. Finally, let $\mathcal{S} = \{S_1,\dots,S_m\}$. 

    We claim that $(\mathcal{S},\mathcal{T})$ is an $S'$-connecting system. First, note that $S_i \subseteq S'$ as each $u_i$ represents a connected component of $F-S$, hence it can only have neighbors in $S'$. Since we removed all component-representing vertices that were leaves, we have $d_{\mathcal{T}}(u_i)>1$. Finally, any edge in $\mathcal{T}[S]$ was untouched and therefore part of $F$ and of $G$. Hence, $(\mathcal{S},\mathcal{T})$ is an $S'$-connecting system.

    We will show that the algorithm will find a Steiner tree of cost at most $\cost(F)$ for this specific $S'$-connecting system. Let $B$ be the complete bipartite graph for this specific iteration of the algorithm. We will give a matching $M$ of finite weight such that $\omega(M) \le \cost(F) -\cost(\mathcal{T}[S'])  - \sum_{k \in K} \cost(\mathsf{SP}[S',k])$. Note that this is indeed sufficient, as the cost of the Steiner tree found by the algorithm is then at most $\cost(F)$ by Claim~\ref{claim:MtoT}.

    For each $u_i \in V(\mathcal{T})$, let $U_i$ be the corresponding connected component of $F$ that was contracted. Note that $U_i$ can only be part of one component $C_p$.  Let $p(i)$ denote the partition that $U_i$ is part of. Then we define $M$ as:
    \[M = \{(S_i,(p(i),0)): i \in \{1,\dots,m\},  k_p \in V(U_i)\} \cup \{(S_i,(p(i),i)): i \in \{1,\dots,m\}, k_p \not \in V(U_i)\} .\]
    
    Note that all $S_i$ are matched exactly once in this way, so $M$ is a maximum matching. We are left to analyze that $\omega(M) \le \cost(F) - \cost(\mathcal{T}[S']) - \sum_{k \in K}\cost(\mathsf{SP}[S',k])$. 
    
    Note that $F$ contains at least the following edge sets:
    \begin{itemize}
        \item $\mathcal{T}[S'] = F[S']$, i.e., all the edges between vertices of $S'$, 
        \item $U_i \cup \delta_F(U_i)$ for all $u_i$, i.e., all non-leaf connected components that were contracted, together the edges between that connected component and the vertices of $S'$,
        \item a path from $k$ to some vertex of $S'$ for all $k \in K\setminus S$ s.t. $k \not \in V(U_i)$ for all $u_i$.
    \end{itemize}
    Therefore, we have 
    \[\cost(F) \ge \cost(\mathcal{T}[S']) + \sum_{i=1}^m\cost(U_i \cup \delta_F(U_i)) + \sum_{k \in K\setminus \left(\bigcup_{i=1}^m(V(U_i)\right)} \cost(\mathsf{SP}[S',k]).\]
    Note that for all $i$ such that $U_i\cap K = \emptyset$, each subgraph $U_i \cup \delta_F(U_i)$ a is a Steiner tree for $S_i$ in $G_{p(i)}$. Therefore, we have $\cost(U_i\cup \delta_F(U_i) \ge \cost(\mathsf{MST}[G_{p(i)},S_i])$. Similarly, we find that for all $i$ such that $U_i\cap K \not = \emptyset$, we have $\cost(U_i\cup \delta_F(U_i) \ge \cost(\mathsf{MST}[G_{p(i)},S_i \cup \{k_{p(i)}\}])$. Therefore, we can conclude that:
\begin{align*}
    \cost(F) &\ge \cost(\mathcal{T}[S']) + \sum_{i\in [m]: U_i\cap K = \emptyset} \hspace{-1em}\cost(\mathsf{MST}[G_{p(i)},S_i]) \\
    &\quad + \sum_{i\in [m]: U_i\cap K \not= \emptyset}\hspace{-1em} \cost(\mathsf{MST}[G_{p(i)},S_i\cup \{k_{p(i)}\}]) + \sum_{k \in K\setminus \left(\bigcup_{i=1}^m(V(U_i)\right)}\hspace{-1em} \cost(\mathsf{SP}[S',k])\\
    & \ge \cost(\mathcal{T}[S']) + \sum_{i\in [m]: U_i\cap K = \emptyset} \hspace{-1em} \omega((S_i,(p(i),i)))\\
    &\quad + \sum_{i\in [m]: U_i\cap K \not= \emptyset}\hspace{-1em} \omega((S_i,(p(i),0))) + \sum_{k \in K} \cost(\mathsf{SP}[S',k])\\
    & = \cost(\mathcal{T}[S']) + \sum_{k \in K} \cost(\mathsf{SP}[S',k]) + \omega(M) \qedhere
\end{align*}
\end{claimproof}

This concludes the proof of \cref{thm:polyspace}.
\end{proof}

\section{Tree $K$-free-decompositions}
\label{sec:Kfreedecomp}
In this section we formally define $K$-free treewidth as the minimum width of a tree $K$-free-decomposition. We also show how to compute a 5-approximation in FPT time.

The definition of \emph{$K$-free treewidth} closely resembles the notion of tree $\hh$-decomposition that inspired it~\cite{HtreewidthEIBEN202157,JansenKW23}. The difference lies in the parts of the decomposition that do not contribute to the cost. Throughout this section, $K$ can be any set of vertices. However, we will use $K$ as the set of \textsc{Steiner Tree} terminals in the rest of this paper.

\begin{definition}[Tree $K$-free-decomposition] \label{def:tfree}
    For a graph $G$ and a set $K \subseteq V(G)$, a \emph{tree $K$-free-decomposition} of graph $G$ is a triple $(\T,\chi,L)$ where $L\subseteq V(G)$, $\T$ is a rooted tree, and $\chi: V(\T) \to 2^{V(G)}$, such that:
    \begin{enumerate}[\hspace{40pt}]
        \item[\KA] for each $v \in V(G)$, the nodes in $\{x \colon v \in \chi(x)\}$ form a non-empty connected subtree of $\T$,
        \item[\KB] for each edge $uv \in E(G)$, there is a node $x \in V(\T)$ with $\{u,v\} \subseteq \chi(x)$,
        \item[\KC] for each $v \in L$, there is a unique $x \in V(\T)$ with $v \in \chi(x)$, and $x$ is a leaf of $\T$,
        \item[\KD] for each node $x \in V(\T)$, we have $\chi(x)\cap L \cap K = \emptyset$. 
    \end{enumerate}
    The width of a tree $K$-free-decomposition $(\T,\chi,L)$ is defined as $\max\{0,\max_{x \in V(\T)} |\chi(x)\setminus L| -1\}$. The $K$-free-treewidth of a graph $G$, denoted $\twtfree(G,K)$, is the minimum width of a tree $K$-free-decomposition of $G$. 
\end{definition}

The first two items in this definition ensure that the pair~$(\T, \chi)$ forms a valid (standard) tree decomposition. The vertex set~$L$, which must be disjoint from the terminal set~$K$, corresponds to the vertices in terminal-free subgraphs that are not decomposed further. Each vertex from~$L$ occurs in exactly one bag, which is a leaf of~$\T$. The vertices from~$L$ do not contribute to the cost of the decomposition, which corresponds to the fact that the sets~$\chi(t) \cap L$ of leaf nodes~$t$ can represent arbitrarily large terminal-free subgraphs. To obtain the definition of $\mathcal{H}$-treewidth for a fixed graph class~$\mathcal{H}$, it suffices to omit~$K$ from the definition and replace condition \KD by the requirement that for each node~$x \in V(\T)$, the graph~$G[\chi(x) \cap L]$ belongs to~$\mathcal{H}$.

In order to find a tree $K$-free-decomposition of a graph $G$ of width at most $5\cdot\twtfree(G,K)+5$, we will use the algorithm by Jansen, de Kroon, and W{\l}odraczyk~\cite{JansenKW23} for finding an 5-approximation of a tree $\triangle$-free-decomposition for a graph. Recall the following definition of tree $\mathcal{H}$-free decompositions applied to the hereditary graph class of triangle-free graphs.

\begin{definition}[Tree $\triangle$-free decomposition, Definition 5 of~\cite{JansenKW23}] \label{def:deltafree}
A tree $\triangle$-free-decomposition of graph $G$ is a triple $(\T,\chi,L)$ 
where $L\subseteq V(G)$, $\T$ is a rooted tree, and $\chi: V(\T) \to 2^{V(G)}$, such that:
    \begin{enumerate}[\hspace{40pt}]
        \item[\DA] for each $v \in V(G)$, the nodes in $\{x \colon v \in \chi(x)\}$ form a non-empty connected subtree of $\T$,
        \item[\DB] for each edge $uv \in E(G)$, there is a node $x \in V(\T)$ with $\{u,v\} \subseteq \chi(x)$,
        \item[\DC] for each $v \in L$, there is a unique $x \in V(\T)$ with $v \in \chi(x)$, and $x$ is a leaf of $\T$,
        \item[\DD] for each node $x \in V(\T)$, the graph $G[\chi(x)\cap L]$ contains no triangles.
    \end{enumerate}
    The width of a tree $\triangle$-free-decomposition $(\T,\chi,L)$ is defined as $\max(0,\max_{x \in V(\T)} |\chi(x)\setminus L| -1)$. The $\triangle$-free-treewidth of a graph $G$, denoted $\twdeltafree(G)$, is the minimum width of a tree  $\triangle$-free-decomposition of $G$. 
\end{definition}

Since the class of $\triangle$-free graphs is defined by a finite family of forbidden subgraphs (namely, that of a triangle), the problem of $\triangle$-free-deletion can be solved in $3^s\cdot \poly(n)$ time, where $s$ is the size of the solution \cite{vertex-deletion} by a standard bounded-depth branching algorithm. This directly implies the following result. 

\begin{proposition}[Jansen, de Kroon, W{\l}odarczyk, \cite{JansenKW23}]\label{prop:deltafree}
    There is an algorithm that, given a graph $G$ and integer $k$, either computes a tree $\triangle$-free-decomposition of width at most $5k+5$ consisting of $\Oh(n)$ nodes, or correctly concludes that $\twdeltafree(G)>k$. The algorithm runs in time $2^{\Oh(k)} \cdot \poly(n)$ and polynomial space.
\end{proposition}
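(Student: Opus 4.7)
The plan is to invoke the generic framework of Jansen, de Kroon, and Włodarczyk~\cite{JansenKW23}, which turns any reasonable FPT algorithm for $\mathcal{H}$-deletion into an FPT $5$-approximation for tree $\mathcal{H}$-decompositions, and then instantiate it with $\mathcal{H}$ being the class of triangle-free graphs. This is precisely the path suggested by the paragraph preceding the proposition, where the authors note the result ``directly implies'' what we want.

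First I would verify the hypotheses that the framework of~\cite{JansenKW23} requires of $\mathcal{H}$: it must be hereditary (closed under induced subgraphs) and admit an FPT algorithm for the $\mathcal{H}$-deletion problem. For triangle-free graphs, heredity is immediate since deleting vertices cannot create new triangles. For $\triangle$-free deletion, a standard bounded-depth branching algorithm due to Cai~\cite{vertex-deletion} applies: while the current graph still contains a triangle $\{u,v,w\}$ with budget $s>0$, recurse into the three subinstances obtained by deleting $u$, $v$, or $w$, each with budget $s-1$. The recursion tree has depth at most $s$ and branching factor $3$, and each node performs only polynomial work (finding any triangle), giving a $3^s\cdot\poly(n)$-time, polynomial-space algorithm that either finds a triangle-deletion set of size at most $s$ or certifies none exists.

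Next I would plug this deletion routine into the generic approximation scheme of~\cite{JansenKW23}. That scheme takes as input an algorithm for $\mathcal{H}$-deletion with running time $T(s)\cdot\poly(n)$ and produces, for a given graph $G$ and integer $k$, either a tree $\mathcal{H}$-decomposition of width at most $5k+5$ on $\Oh(n)$ nodes, or a correct rejection certifying $\tw_{\mathcal{H}}(G)>k$, with overall running time $T(\Oh(k))\cdot\poly(n)$ and space overhead polynomial on top of the deletion subroutine. Substituting $T(s)=3^s$ yields running time $3^{\Oh(k)}\cdot\poly(n)=2^{\Oh(k)}\cdot\poly(n)$, a decomposition of width at most $5k+5$ with $\Oh(n)$ nodes, and polynomial space usage, exactly as claimed.

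The main obstacle is not conceptual but purely one of bookkeeping: checking that (i) the triangle-deletion branching really is polynomial space (it is, since its recursion stack has depth at most $s$ and stores only constant information per level), (ii) the node-count bound in the framework of~\cite{JansenKW23} is genuinely $\Oh(n)$ rather than $\Oh(f(k)\cdot n)$, and (iii) no step in the generic reduction introduces super-polynomial space. All three are verified directly from the statement of the underlying approximation machinery in~\cite{JansenKW23}, so no additional ideas are needed beyond applying the framework off the shelf to the triangle-freeness setting.
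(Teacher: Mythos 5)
Your proposal is correct and follows exactly the route the paper intends: the paper itself does not prove this proposition but cites it as a direct consequence of the generic framework in~\cite{JansenKW23}, after noting (as you do) that $\triangle$-free deletion is solvable in $3^s\cdot\poly(n)$ time and polynomial space by bounded-depth branching~\cite{vertex-deletion}. Your additional bookkeeping about heredity, node count, and space usage is a slight elaboration of what the paper leaves implicit, but it is the same argument.
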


Using these ingredients, we present an FPT 5-approximation for $K$-free treewidth.

\begin{theorem}
    \label{thm:findKfreedec}
    There is an algorithm that takes an input graph $G$, vertex set $K\subseteq V(G)$, integer $k$, and either computes a tree $K$-free-decomposition of width at most $5k+5$ consisting of $\Oh(n)$ nodes, or correctly concludes that $\twtfree(G,K)>k$. The algorithm runs in time $2^{\Oh(k)} \cdot \poly(n)$ and polynomial space.
\end{theorem}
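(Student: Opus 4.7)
The plan is to reduce computing a tree $K$-free-decomposition of $(G,K)$ to computing a tree $\triangle$-free-decomposition of a suitably constructed auxiliary graph $G^*$, so that Proposition~\ref{prop:deltafree} can be invoked as a black box. I construct $G^*$ from $(G,K)$ in two steps: first subdivide every edge $\{u,v\}$ of $G$ once by inserting a fresh degree-$2$ vertex $w_{uv}$, and then, for each terminal $k \in K$, attach a triangle gadget consisting of two new vertices $a_k, b_k$ together with the edges $\{k, a_k\}, \{k, b_k\}, \{a_k, b_k\}$. A single subdivision is enough to destroy every triangle inherited from $G$, since each cycle of $G$ now has length at least six, so the only triangles in $G^*$ are the gadgets $\{k, a_k, b_k\}$, one per terminal.

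I would first establish the width relation $\twdeltafree(G^*) \leq \twtfree(G,K) + \Oh(1)$. From a tree $K$-free-decomposition $(\T,\chi,L)$ of $G$ of width $w$, build $(\T^*,\chi^*,L^*)$ for $G^*$ by attaching to $\T$ a new leaf bag $\{u,v,w_{uv}\}$ for each edge (under any existing bag already containing $\{u,v\}$, which exists by \KB) and a new leaf bag $\{k, a_k, b_k\}$ for each terminal (under any bag containing $k$), while placing the fresh vertices $w_{uv}, a_k, b_k$ into $L^*$ alongside the original $L$. Properties \DA--\DC are routine; \DD holds because the original $L$-vertices are pairwise nonadjacent in $G^*$ (their $G$-edges have been subdivided), and the new leaf bags contain at most two $L^*$-vertices, too few to form a triangle.

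Next, invoke Proposition~\ref{prop:deltafree} on $G^*$ with parameter $k$ in $2^{\Oh(k)}\poly(n)$ time and polynomial space, using $|V(G^*)| = \poly(n)$. If it reports $\twdeltafree(G^*) > k$, the contrapositive of the width bound yields $\twtfree(G,K) > k$; otherwise we obtain $(\T^*, \chi^*, L^*)$ with $\Oh(n)$ nodes and width at most $5k+5$. I then convert this back by setting $\chi(x) := \chi^*(x) \cap V(G)$ and $L := L^* \cap V(G)$, preceded by a normalization step that forces every subdivision and gadget vertex into $L^*$. For gadget vertices this is immediate via the Helly property of subtrees of a tree applied to the pairwise-intersecting $T^*_k, T^*_{a_k}, T^*_{b_k}$ (pairwise intersection follows from \DB on the gadget edges): attach a fresh leaf child at the common bag, put $a_k, b_k$ into $L^*$, and delete them from every other bag. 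Property \DD applied to this new leaf bag then forces $k \notin L^*$ for every terminal, which, after the conversion, yields $L \cap K = \emptyset$ and hence \KD.

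The main obstacle is the analogous normalization for subdivision vertices: $T^*_u$ and $T^*_v$ need not intersect, so Helly does not directly produce a bag simultaneously containing $u$, $v$, and $w_{uv}$. The remedy is to extend $T^*_u$ by adding $u$ along the path inside $T^*_{w_{uv}}$ joining its meeting points with $T^*_u$ and $T^*_v$; afterwards the fresh-leaf-child maneuver installs $w_{uv}$ in a unique leaf bag containing both endpoints, so that deleting $w_{uv}$ from every other bag preserves \KB. The delicate point is to verify that this enrichment does not push the width past $5k+5$; I would argue this by a local analysis observing that each bag can only be enriched in a controlled way (at most one new endpoint per subdivision vertex already present in the bag), absorbed into the additive slack of the width relation. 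With this in place, \KA--\KC follow directly from the corresponding properties of $(\T^*, \chi^*, L^*)$, and the overall running time and space are dominated by the call to Proposition~\ref{prop:deltafree}, giving the claimed $2^{\Oh(k)}\poly(n)$ bound.
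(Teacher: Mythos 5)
You use the same reduction graph as the paper — subdivide every edge of $G$ and attach a triangle gadget $\{k,a_k,b_k\}$ to each terminal — and the same plan of invoking Proposition~\ref{prop:deltafree} and translating the resulting tree $\triangle$-free-decomposition back into a tree $K$-free-decomposition of $G$, but the translation step diverges from the paper's and it is here that a real gap appears. The paper converts by a direct per-bag substitution: fix one endpoint $\alpha(e)$ per edge $e$ of $G$ and, in every $\hat\T$-bag, replace the subdivision vertex of $e$ by $\alpha(e)$, removing $\alpha(e)$ from $L$ whenever the subdivision vertex appears outside $\hat L$. This is width-nonincreasing by a clean injection (every $v\in\chi(x)\setminus L$ arises from a distinct element of $\hat\chi(x)\setminus\hat L$: either $v$ itself, or an $e$ with $\alpha(e)=v$), and \KB is immediate from \DB applied to the edge between the subdivision vertex and the other endpoint. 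You instead extend $T^*_u$ along the path in $T^*_{w_{uv}}$ and then push $w_{uv}$ into a fresh leaf bag in $L^*$, claiming the enrichment is ``absorbed into the additive slack of the width relation.'' That claim is unsound: the slack is $O(1)$, but a single bag returned by Proposition~\ref{prop:deltafree} can consist almost entirely of subdivision vertices, up to $5k+6$ of them, and the enrichment could then add up to that many new original endpoints to the bag, roughly doubling it rather than adding a constant. What actually saves the width is that each newly added endpoint is paired with a subdivision vertex that is subsequently removed from the same bag when normalized into $L^*$, so the gain is exactly offset; but you never make this offsetting argument, and without it the $5k+5$ bound is not established.

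There is also a smaller flaw in the gadget normalization. You argue that \DD on the fresh leaf ``forces $k\notin L^*$,'' but the decomposition from Proposition~\ref{prop:deltafree} may well arrive with $k\in L^*$ and, say, $a_k\in L^*$, $b_k\notin L^*$, which is entirely consistent with \DD. Declaring both $a_k,b_k\in L^*$ in that situation would \emph{violate} \DD rather than imply $k\notin L^*$; you first need to swap $k$ out of $L^*$ in exchange for the one gadget vertex not already in $L^*$, as the paper does explicitly. A related wrinkle: hanging a fresh leaf child below a bag that is itself already a leaf turns that bag into an internal node, endangering \DC/\KC for the $L$-vertices it contains; the paper avoids this by augmenting the existing leaf bag in place when the target bag is a leaf, a case distinction your construction omits.
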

\begin{proof}
    We will first construct a graph $\hat G$ from $G$ and use Proposition~\ref{prop:deltafree} on $\hat G$. We will then use the resulting tree $\triangle$-free-decomposition of $\hat{G}$ to construct a tree $K$-free-decomposition for $G$ of width at most $5k+5$. 

    To get $\hat{G}$, we will subdivide every edge of $G$ and replace every vertex of $K$ with a triangle. Formally, we define $\hat{G}$ as follows. Let \[V(\hat{G}) = V(G) \cup \{ e \colon e\in E(G) \}  \cup  \{ k', k'' \colon k \in K\}, \text{ and}\]
    \[E(\hat{G}) = \left \{ \{u, e\} \colon u \in V(G), e\in E(G), u \in e\right\} \cup \left \{\{k, k'\}, \{k,k''\}, \{k',k''\} \colon k \in K\} \right \} .\] 

    An important property of $\hat{G}$ that follows from this construction, is that the only triangles of $\hat{G}$ are of the form $\{k,k',k''\}$ for $k\in K$. We first show that the $\triangle$-free-treewidth of $\hat{G}$ is at most the $K$-free-treewidth of $G$. 

    \begin{claim} For any graph $G$ we have 
        $\twdeltafree(\hat{G}) \le \max\{\twtfree(G,K) ,1\}$. 
    \end{claim}
    \begin{claimproof}
        Take $(\T, \chi,L)$ as a tree $K$-free-decomposition for $G$ of minimum width. We will create a tree $\triangle$-free-decomposition $(\hat\T,\hat\chi, \hat L)$ for $\hat G$ with the same or smaller width.
        Take $(\T,\chi, L)$ as a basis for $(\hat\T,\hat\chi,\hat L)$, but we extend it in the following way:
        \begin{itemize}
            \item For any $k \in K$ add $k', k''$ to $\hat L$. Take any $x \in V(\T)$ such that $k \in \chi(x)$. If $x$ is a leaf, add $k'$ and $k''$ to $\hat\chi(x)$. If $x$ is not a leaf, add a new node $x_k$ to $V(\hat \T)$ as a child of $x$ with $\hat\chi(x_k) = \{k,k',k''\}$.
            \item For any $\{u,v\} = e \in V(\hat G)$ add $e$ to $\hat L$. Take any $x \in V(\T)$ such that $e\subseteq \chi(x)$. If $x$ is a leaf, add $e$ to $\hat\chi(x)$. If $x$ is not a leaf, add a new node $x_e$ to $V(\hat \T)$ as a child of $x$ with $\hat\chi(x_{e}) = \{u,v,e\}$.
        \end{itemize}

        First note that for all original nodes $x \in V(\T)$ we have $|\chi(x) \setminus L| \le |\hat \chi(x) \setminus \hat L|$, as whenever vertices were added to $\hat\chi(x)$, they were also added to $\hat L$. Moreover, for all new nodes $x \in V(\hat\T) \setminus V(\T)$, we have $|\hat\chi(x)\setminus L| \le 2$. Hence the width of $(\hat\T,\hat\chi, \hat L)$ is upper bounded by the maximum of $1$ and the width of $(\T,\chi,L)$, i.e., $\twtfree(G,K)$.
        
        We will prove that $(\hat\T,\hat\chi, \hat L)$ has all the properties as in Definition~\ref{def:deltafree}. Note that \DA, \DB, and \DC clearly hold by our construction. For \DD, we first note that $\hat G$ only contains triangles of the form $\{k,k',k''\}$, as we subdivided $G$ at every edge to get $\hat G$. Note that for any $k \in K$ we have $k \not \in L$: otherwise $k$ would have been part of a leaf $x$ (by \KC) and then $\chi(x)$ would contain a vertex of $K \cap L$, which would contradict \KD. Therefore, as $k \not\in \hat L$ for all $k \in K$, we have that $G[\hat L]$ contains no triangles. 
    \end{claimproof}

    Hence, if we apply Proposition~\ref{prop:deltafree} to $\hat G$, we will either get the output that $5k+5 < \twdeltafree(\hat G) \le \twtfree(G,K)$, or we get a tree $\triangle$-free-decomposition $(\hat \T, \hat \chi, \hat L)$ of width at most $5k+5$ for $\hat G$. We will show how to use $(\hat \T, \hat \chi, \hat L)$ as a basis for a tree $K$-free-decomposition for $G$. 
    
    \begin{claim}
        Given tree $\triangle$-free-decomposition $(\hat \T, \hat \chi, \hat L)$ for $\hat G$, we can find in polynomial time a tree $K$-free decomposition for $G$ of the same or smaller width. 
    \end{claim}
    \begin{claimproof}
        First we show that we can assume that $k \not \in \hat{L}$ for any $k \in K$. Indeed, if $k \in \hat{L}$, then there is a unique leaf $x \in V(\T)$ such that $k \in \hat\chi(x)$. Since $k'$ and $k''$ are adjacent to $k$ in $\hat G$, we find $\{k,k',k''\} \subseteq \chi(x)$. Moreover, as $k'$ only has $k$ and $k''$ as neighbors, we may assume that $x$ is the only bag containing $k'$ (similarly, $k''$ is only contained in $\chi(x)$). Note that not all of $\{k,k',k''\}$ can be part of $\hat L$ as it is a triangle, assume that $k' \not \in \hat L$. Then change $\hat L$ such that $k' \in \hat L$ and $k \not \in \hat L$. This creates a tree $\triangle$-free-decomposition for $\hat G$ of the same width as $(\hat \T, \hat \chi, \hat L)$. Hence, we may assume that $\hat L \cap K = \emptyset$.
    
        We create a tree $K$-free decomposition $(\T,\chi,L)$ for $G$ based on $(\hat \T, \hat \chi, \hat L)$. For every $e \in E(G) \setminus$, we choose an arbitrary endpoint $u$ and let $\alpha(e) = u$. This $u$ will replace $e$ in the new decomposition. For this to work, We have to adjust $L$ a bit to take into account the case where $\alpha(e) \in \hat L$, but $e \not \in \hat L$. Namely, then suddenly $\alpha(e)$ occurs in non-leaf vertices so it should not be part if $L$. We create the following tree $K$-free-decomposition for $G$:

        \begin{itemize}
            \item $\T = \hat\T$,
            \item $L = (\hat L \cap V(G) )  \setminus \{ \alpha(e) \colon e \in E(G) \setminus \hat L\}$,
            \item For all $x \in V(\T)$ we take  \[\chi(x) = (\hat \chi(x) \cap V(G)) \cup \{ \alpha(e) \colon e\in E(G)\cap \hat\chi (x)\}.\]
        \end{itemize}

        We show that $|\chi(x)\setminus L| \le |\hat \chi(x)\setminus \hat L|$ for all $x \in V(\T)$. 
        Take $x \in V(\T)$. We will show that for any $v \in \chi(x)\setminus L$, either there exists an edge $e \in \hat\chi(x)\setminus \hat L$ with $v = \alpha(e)$, or we have $v \in \hat\chi(x) \setminus \hat L$. Since for $v \in \chi(x)\setminus L$, the sets  $\{ e: v = \alpha(e) \} \cup \{v\}$ for different $v$ are disjoint, this will then prove   
        $|\chi(x)\setminus L| \le |\hat \chi(x)\setminus \hat L|$.
        So take any $v \in \chi(x)\setminus L$. Note that when $v = \alpha(e)$ for some $e \in \hat \chi(x) \setminus \hat L$, we directly find what we wanted to show. Else (i.e., if $v \not = \alpha(e)$ for all $e \in \hat \chi(x)\setminus \hat L$), by definition of $\chi(x)$, we have $v \in V(G)\cap \hat \chi(x)$. By definition of $L$, we then find that $v \not \in \hat L$. In that case, we directly find $v \in \hat\chi(x)\setminus \hat L$, i.e., what we were set out to prove.  
        
        We will now prove that this decomposition has all the required properties. For \KA consider any $v \in V(G)$.  Note that $\chi^{-1}(v)$ is the union of connected subtrees, namely that of $\hat\chi^{-1}(v)$ and that of $\hat\chi^{-1}(e)$ for each $e\in E(G)$ with $\alpha(e) = u$. All of these subtrees overlap with $\hat\chi^{-1}(u)$, hence \KA holds. 
        
        For \KB, consider any edge $e = \{u,v\} \in E(G)$ 
        and assume $u = \alpha(e)$. Then \DB and \DD for 
        $(\hat \T,\hat \chi, \hat L)$ tell us that there is a $x \in V(\T)$ such that $\{e,v\} \subseteq \hat \chi(x)$. By definition of $\chi(x)$, this means that $\{\alpha(e),v\} = \{u,v\} \subseteq \chi(x)$.

        For \KC, take any $v \in L$ and assume for the sake of contradiction that the property does not hold. This means that there is a non-leaf $x \in V(\T)$ such that $v \in \chi(x)$. Since $v \in L$, we also have $v \in \hat L$. Hence $v \not \in \hat \chi (x)$ as $x$ is not a leaf vertex. This would imply that $v = \alpha(e)$ for some $e \in E(G) \cap \hat \chi(x)$. Since $x$ is a non-leaf vertex we find $e \in E(G)\setminus \hat L$. But then $\alpha(e) = v$ should not be part of $L$, i.e., we find a contradiction.
        
        For \KD, we simply note that we were able to assume that $k \not \in \hat L$ for all $k\in K$ at the beginning of this proof. 
    \end{claimproof}

This concludes the proof of \cref{thm:findKfreedec}.
\end{proof}


\subsection{Nice tree $K$-free-decompositions} \label{sec:nice:dec}
Many algorithms using standard tree decompositions use \emph{nice} tree decompositions (e.g.,~\cite{cygan2015parameterized}). We will use a very similar notion of a \emph{nice tree $K$-free-decomposition}, where every non-leaf node of a rooted tree decomposition $(\T,\chi)$ is one of five types:
\begin{itemize}
    \item \textbf{join node:} a node $x$ with exactly two child nodes $c_1$, $c_2$, with $\chi(x) = \chi(c_1) = \chi(c_2)$.
    
    \item \textbf{vertex introduce node:} a node $x$ that \emph{introduces a vertex} $v \in V(G)$ has exactly one child node $c$, such that $\chi(x) = \chi(c)\cup \{v\}$ with $v \not \in \chi(c)$. 
    
    \item \textbf{vertex forget node:} a node $x$ that \emph{forgets a vertex} $v \in V(G)$ has exactly one child node $c$, such that $\chi(x) = \chi(c)\setminus \{v\}$ with $v \in \chi(c)$. 
    
    \item \textbf{edge introduce node:} a node $x$ that \emph{introduces an edge} $\{u,v\}$ has exactly one child node $c$, such that $\chi(x) = \chi(c)$ and $\{u,v\}\subseteq \chi(x)$. 

    \item \textbf{leaf introduce node:} a node $x$ that has exactly one child node $c$ that is a leaf, such that $\chi(x) = \chi(c)\setminus L$.  
\end{itemize} 
We say that in a leaf node $c$, all edges of $G[\chi(c) \setminus L]$ are introduced. In a leaf introduce node $x$ with child node $c$, all remaining edges (i.e., all edges of $G[\chi(x)]$ except those in $G[\chi(c)\setminus L]$ are introduced.  We require that every edge is introduced exactly once in the whole decomposition. We use the notation $G_x$ to denote the subgraph of $G$ with $V(G_x)$ the union of all bags of descendants of $x$ in $\T$ and $E(G_x)$ the edges that are introduced by node $x$ or a descendant of $x$. 

Note that we can transform any tree $K$-free-decomposition into a \emph{nice} tree $K$-free-decomposition, in almost exactly the same as transforming a standard tree decomposition into a nice tree decomposition (cf.~\cite[Lemma 7.4]{cygan2015parameterized}, \cite{kloks1994treewidth}). First we apply the usual transformations on the interior of $\T$, i.e.we ensure that $\T$ is binary and introduce / forget at most one vertex per bag. Note that we only introduce edges that are not part of $G[\chi(x)]$ for a leaf $x \in V(\T)$. Then, we will do the following transformation for each leaf $c \in V(\T)$: Remove the edge between $c$ and its parent $p$ and add a node $x$ as a child of $p$ and a parent of $c$. Set $\chi(x) = \chi(c)\setminus L$. Now $x$ is a leaf introduce node, introducing the leaf $c$. This ensures all the properties of a nice tree decomposition without increasing the width of the decomposition. 

\section{Single-exponential time algorithm parameterized by $K$-free treewidth}
\label{sec:singleexpalgorithm}
In order to obtain a single-exponential algorithm even though the number of partitions of a size-$k$ set is super-exponential in~$k$, we use the rank-based approach by Bodlaender, Cygan, Kratsch, and Nederlof~\cite{BODLAENDER201586}. The main idea behind this approach is to derive single-exponential rank bounds for matrices that encode which pairs of `partial solutions' combine into a full solution. These bounds imply that any set of partial solutions can be trimmed down to a \emph{representative} subset of single-exponential size while providing the following guarantee: if the original set contained a partial solution that can be extended to an optimal full solution, then the representative set still contains such a partial solution.

We start making these ideas concrete for \textsc{Steiner Tree}. Given a node $x$ of a tree $K$-free decomposition~$(\T, \chi, L)$ of the input graph~$G$ obtained via \cref{thm:findKfreedec}, we will work bottom-up in the decomposition tree. For each node~$x \in V(\T)$ we compute a set of partial solutions in the associated graph~$G_x$ defined in \cref{sec:nice:dec}, using the previously computed data for the children of~$x$. During this process, we decide which partial solutions in $G_x$ to keep based on their behaviour on the boundary, i.e., on $\chi(x)$. Roughly speaking, a partial solution corresponds to a subgraph~$F$ of~$G_x$ containing all vertices of~$K \cap V(G_x)$. The subgraph~$F$ does not have to be connected, but if it is disconnected then each connected component contains a vertex of~$\chi(x)$. The behavior of~$F$ can be summarized by a partition of the vertices of~$\chi(x) \cap V(F)$ based on their spread over connected components of~$F$: two vertices $u,v \in \chi(x) \cap V(F)$ are in the same set of the partition if and only if they are in a common connected component of~$F$. To work with representative sets of partial solutions, we therefore need some terminology to work with partitions. We view a partition of a set~$U$ as a set of disjoint subsets of~$U$ whose union is~$U$.

\begin{definition}
For a finite set~$U$ we use~$\Pi(U)$ to denote the set of all partitions of~$U$. For $P, Q \in \Pi(U)$ we let $P\sqsubseteq Q$ if and only if for all $A \in Q$ there exists a set $A' \in P$ such that $A \subseteq A'$. If this holds, we say that $Q$ is a \emph{refinement} of $P$. 

For~$P,Q \in \Pi(U)$ we denote by~$P \sqcup Q$ the \emph{join} of partitions~$P$ and~$Q$ in the partition lattice, which can be obtained as follows. Let~$G_P$ (resp.~$G_Q$) be a graph on vertex set~$U$ whose connected components partition~$U$ according to~$P$ (resp.~$Q$). Then~$P \sqcup Q$ corresponds to the partition of~$U$ formed by the connected components of the union~$G_P \cup G_Q$. This implies that for every $A \in P \cup Q$ there exists $A' \in P\sqcup Q$ such that $A \subseteq A'$.
\end{definition}

Using this notation, we can introduce the concept of a set~$\mathcal{R}$ of partial solutions \emph{representing} another~$\mathcal{A}$. To facilitate the discussion of our algorithm, we will work with partial solutions of two types. The most intuitive form consists of subgraphs~$F$ of~$G_x$ as described above. A more succinct representation of the essential information can be obtained by encoding each such subgraph~$F$ as a pair~$(P,w)$, where~$P$ is the partition it induces on the boundary vertices and~$w$ is the integer giving the total cost of the edges in~$F$. From this perspective, the objects stored in a set of partial solutions consist of pairs~$(P,w)$. The following definition captures the idea of two partial solutions together forming a full solution: they merge together to provide a subgraph consisting of a single connected component whenever the partition representing the combined connectivity information consists of a single set containing the entire boundary.

\begin{definition}[Definition 3.4 from \cite{BODLAENDER201586}]\label{def:represents:old}
    For a universe $U$ and two sets $\mathcal{A}, \mathcal{R} \subseteq \{(P,w): P \in \Pi(U) \wedge w \in \nat \}$, we say that $\mathcal{R}$ \emph{represents} $\mathcal{A}$ if for all $Q \in \Pi(U)$ we have
    \[\min\{w \colon (P,w)\in\mathcal{A} \wedge P\sqcup Q = \{U\}\}  = \min\{w \colon (P,w)\in\mathcal{R} \wedge P\sqcup Q = \{U\}\} .\]
\end{definition}

Bodlaender et al.~\cite{BODLAENDER201586} showed that a representative set $\mathcal{R}$ can be computed by finding a basis of a suitable matrix of rank~$2^{|U|-1}$. We can therefore bound the size of $\mathcal{R}$ by $2^{|U|-1}$. 

\begin{proposition}[Theorem 3.7 from \cite{BODLAENDER201586}] \label{prop:findrepset} 
There is an algorithm that given a universe $U$ and a set $\mathcal{A} \subseteq \{(P,w): P \in \Pi(U) \wedge w \in \nat \}$ in time $2^{\Oh(|U|)} \poly(|\mathcal{A}|)$ finds a set $\mathcal{R} \subseteq \mathcal{A}$ of size at most $2^{|U|-1}$, that represents $\mathcal{A}$. 
\end{proposition}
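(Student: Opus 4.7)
The plan is to follow the rank-based approach of~\cite{BODLAENDER201586}: embed each partition $P \in \Pi(U)$ as a vector $\phi(P)$ in an $\mathbb{F}_2$-vector space of dimension $2^{|U|-1}$ so that the merging condition $P \sqcup Q = \{U\}$ is captured by the $\mathbb{F}_2$-inner product $\langle \phi(P), \phi(Q)\rangle$, and then greedily select weighted partial solutions in order of non-decreasing weight, keeping only those whose $\phi$-images extend the currently selected basis. The rank bound will then force the output to have size at most $2^{|U|-1}$.

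First I would set up the embedding. Fix an arbitrary element $u_0 \in U$ and define $\phi(P) \in \mathbb{F}_2^{\{X \subseteq U \,:\, u_0 \in X\}}$ by $\phi(P)_X = 1$ iff $X$ is a union of blocks of $P$. A subset $X \ni u_0$ is a union of blocks of both $P$ and $Q$ iff it is a union of blocks of their join $P \sqcup Q$; such $X$ are obtained by combining the block of $P\sqcup Q$ containing $u_0$ with any subset of the remaining blocks, giving $2^{|P \sqcup Q|-1}$ choices. Hence $\langle \phi(P), \phi(Q)\rangle \equiv 2^{|P \sqcup Q|-1} \pmod{2}$, which equals $1$ exactly when $P \sqcup Q = \{U\}$. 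Writing $\Phi$ for the matrix whose rows are the vectors $\phi(P)$, this yields the factorization $M = \Phi \Phi^T$ for the $\Pi(U)\times\Pi(U)$ merging matrix $M$ with $M[P,Q] = [P\sqcup Q = \{U\}]$, so $\mathrm{rank}_{\mathbb{F}_2}(M) \le 2^{|U|-1}$.

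Next I would sort $\mathcal{A}$ in non-decreasing order of weight as $(P_1,w_1),\ldots,(P_m,w_m)$ and iterate through it, including $(P_i,w_i)$ in $\mathcal{R}$ iff $\phi(P_i)$ is linearly independent over $\mathbb{F}_2$ from $\{\phi(P_j) : (P_j,w_j)\in\mathcal{R}\}$. Each vector $\phi(P_i)$ can be computed in $2^{\Oh(|U|)}$ time, and maintaining a reduced basis via Gaussian elimination costs $2^{\Oh(|U|)}$ per independence test, so the total running time is $2^{\Oh(|U|)}\poly(|\mathcal{A}|)$, and $|\mathcal{R}|\le 2^{|U|-1}$ by the rank bound.

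Finally I would verify representativeness. Fix $Q\in\Pi(U)$; if the minimum in \cref{def:represents:old} is $\infty$ there is nothing to prove, so take $(P^*,w^*)\in\mathcal{A}$ achieving the minimum on the left. If $(P^*,w^*)\in\mathcal{R}$ we are done; otherwise, by the selection rule, $\phi(P^*) = \sum_j \phi(P_j)$ over $\mathbb{F}_2$ for a non-empty collection of $(P_j,w_j)\in\mathcal{R}$ processed before $P^*$, so with $w_j \le w^*$. Taking the $\mathbb{F}_2$-inner product with $\phi(Q)$ gives $M[P^*,Q] = \sum_j M[P_j,Q]$, and since $M[P^*,Q]=1$ at least one index $j$ satisfies $M[P_j,Q]=1$, i.e., $P_j \sqcup Q = \{U\}$ with $w_j \le w^*$, matching the minimum. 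The main obstacle is establishing the rank bound: the crux is identifying the right vector representation $\phi(P)$ and observing that $\langle \phi(P),\phi(Q)\rangle$ reduces modulo $2$ to the indicator of $P\sqcup Q = \{U\}$ precisely because $2^{\ell-1}$ is odd only when $\ell=1$; once this factorization is in place, sorting together with Gaussian elimination finishes everything routinely.
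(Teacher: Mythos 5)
The paper does not prove this proposition; it cites it directly as Theorem~3.7 of Bodlaender, Cygan, Kratsch, and Nederlof~\cite{BODLAENDER201586}. Your reconstruction is a correct and faithful rendering of the argument in that reference: the $\mathbb{F}_2$ factorization $M=\Phi\Phi^{T}$ via "consistent cuts" (unions of blocks containing a fixed $u_0$), the observation that $\langle\phi(P),\phi(Q)\rangle\equiv 2^{|P\sqcup Q|-1}\pmod 2$, and the weight-sorted greedy basis extraction via Gaussian elimination are exactly the ingredients used there, and your $2^{\Oh(|U|)}\poly(|\mathcal{A}|)$ bound matches what the proposition claims (the original reference additionally shaves the constant in the exponent using fast matrix multiplication, which the statement here does not require).
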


We slightly adjust this proposition to also work for storing subgraphs (i.e., partial solutions) as representative sets, instead of partitions. This facilitates the presentation of the remainder of the algorithm. For this purpose, we define how to map subgraphs to partitions.

\begin{definition} \label{def:represent:partition}
    For a graph $G$, a set $X \subseteq V(G)$, and $F\subseteq G$ a subgraph of $G$, let $\pi_F(X)$ be the partition of $X$ such that: 
    \begin{itemize}
        \item any $u \in X\setminus V(F)$, we have $\{u\} \in \pi_F(X)$,
        \item for any $u, v \in X \cap V(F)$, vertices~$u$ and $v$ are in the same set of $\pi_F(X)$ if and only if $u$ and $v$ belong to a common connected component of $F$.
    \end{itemize} 
\end{definition}

See \cref{fig:partition} for an example of a subgraph $F$ and the partition $\pi_F(X)$. Observe that if~$C$ is a connected component of~$F$ and~$H = F \setminus C$, then~$\pi_F(X) = \pi_H(X) \sqcup \pi_C(X)$; we will use this fact later. Next, we adjust the definition of representing to work for subgraphs. We use~$\{ F \subseteq G \}$ to denote the set of all subgraphs of a graph~$G$.

\begin{figure}[t]
    \centering
    \begin{tikzpicture} [scale = 1, 
    vertex/.style = {circle, draw, fill = black, align=center,minimum size = 3pt, inner sep = 0},
    terminal/.style = {rectangle, draw, fill = black, align=center,minimum size = 5pt, inner sep = 0}
,label distance=-2pt]

\tikzstyle{every path}=[line width=1pt]


\fill[fill = gray!80] (3.5,0) ellipse (4.5 and .9);


\node[] at  (8.5,0) {$X$};

\node[vertex,label=above:$x_1$] (x1) at  (0,0) {};
\node[vertex,label=above:$x_2$] (x2) at  (1,0) {};
\node[vertex,label=above:$x_3$] (x3) at  (2,0) {};
\node[vertex,label=above:$x_4$] (x4) at  (3,0) {};
\node[vertex,label=above:$x_5$] (x5) at  (4,0) {};
\node[vertex,label=above:$x_6$] (x6) at  (5,0) {};
\node[vertex,label=above:$x_7$] (x7) at  (6,0) {};
\node[vertex,label=above:$x_8$] (x8) at  (7,0) {};

\draw (x1)  -- (-1,-1) -- (-0.5,-2) -- (.4,-1.5);
\draw (x2) -- (1.5,-1) -- (1.4, -1.8) -- (2.5,-1) -- (x3);
\draw (x4) -- (2.5,-1);
\draw (x4) -- (3.5,-1.5) -- (4.2,-1.3) -- (x5);
\draw (3.5,-1.5) -- (3.2,-1.7) -- (3.9,-2);
\draw (x6) -- (5.2,-1) -- (4.7,-1.5) --(5.2,-2) -- (5.9,-1.3)-- (x7);
\draw (6.4,-1.4)-- (x7);

\end{tikzpicture}
    \caption{In this example we have $X=
    \{x_1,\dots,x_8\}$. The lines are a visual representation of a subgraph $F$ of $G$, the graph $G$ itself is not drawn. For this subgraph $F$ we have $\pi_F(X) = \{\{x_1\},\{x_2, x_3, x_4, x_5\},\{x_6,x_7\},\{x_8\}\}$.}
    \label{fig:partition}
\end{figure}
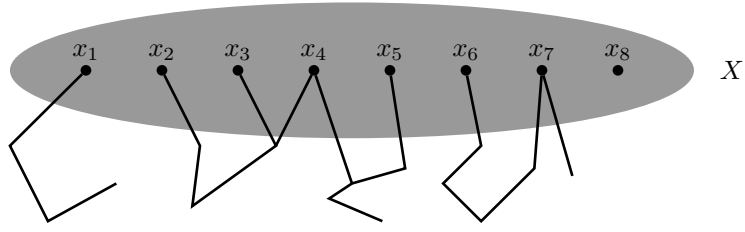

\begin{definition} \label{def:represent:subgraph}
    Given an edge-weighted graph $G$, a vertex set $Z \subseteq V(G)$, and two sets of subgraphs $\mathcal{B}, \mathcal{R} \subseteq \{F\subseteq G \}$, we say that $\mathcal{R}$ \emph{represents} $\mathcal{B}$ \emph{on $Z$} if for all $Q \in \Pi(Z)$ we have
    \begin{equation}
      \min\{\cost(F) \colon F \in \mathcal{B} \wedge \pi_F(Z) \sqcup Q = \{Z\}\}  = \min\{\cost(F) \colon F \in \mathcal{R}  \wedge \pi_F(Z)\sqcup Q = \{Z\}\}. \label{eq:defrep}  
    \end{equation}
\end{definition}

The given definitions support the following straight-forward extension of \cref{prop:findrepset}, which facilitates computing representative subsets of partial solutions stored as subgraphs. The computation works with respect to \emph{any} choice of vertex set~$Z$ as boundary. The fact that~$Z$ is not necessarily a separator in~$G$ is crucial for its later use. 

\begin{corollary} \label{cor:repsets} Consider an edge-weighted graph $G$ and a subset of the vertices $Z \subseteq V(G)$. For a given set of subgraphs $\mathcal{B} \subseteq \{ F \subseteq G\}$, we can find in $2^{\Oh(|Z|)} \poly(|\mathcal{B}|)$ time a set $\mathcal{R} \subseteq \mathcal{B}$ of size at most $2^{|Z|-1}$ that represents $\mathcal{B}$ on $Z$.
\end{corollary}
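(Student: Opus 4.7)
}
My plan is to obtain the statement by a straightforward reduction to \cref{prop:findrepset}, using $\pi_F(Z)$ and $\cost(F)$ to turn every subgraph into a partition--weight pair. Concretely, I would first compute the multiset
\[
\mathcal{A} \;=\; \bigl\{\bigl(\pi_F(Z),\,\cost(F)\bigr) \;:\; F \in \mathcal{B}\bigr\} \;\subseteq\; \{(P,w) : P \in \Pi(Z) \wedge w \in \nat\},
\]
while also maintaining, for each pair $(P,w) \in \mathcal{A}$ that we produce, a pointer to one subgraph $F_{P,w} \in \mathcal{B}$ that witnessed it. For a single $F \in \mathcal{B}$ the partition $\pi_F(Z)$ can be computed in $\poly(n)$ time by a standard connected-components/union-find pass on $F$, so the set $\mathcal{A}$ together with the witness pointers is built in $\poly(n) \cdot |\mathcal{B}|$ time.

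Next I would feed $\mathcal{A}$ into the algorithm of \cref{prop:findrepset} with universe $U = Z$, obtaining a representative set $\mathcal{R}_0 \subseteq \mathcal{A}$ of size at most $2^{|Z|-1}$ in time $2^{\Oh(|Z|)} \poly(|\mathcal{A}|) = 2^{\Oh(|Z|)} \poly(|\mathcal{B}|)$. I then lift $\mathcal{R}_0$ back to subgraphs by setting
\[
\mathcal{R} \;=\; \{\, F_{P,w} \;:\; (P,w) \in \mathcal{R}_0 \,\} \;\subseteq\; \mathcal{B},
\]
so that $|\mathcal{R}| \le |\mathcal{R}_0| \le 2^{|Z|-1}$. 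Summing over the three steps, the total running time is $2^{\Oh(|Z|)} \poly(|\mathcal{B}|)$ as required (absorbing the $\poly(n)$ factors coming from preprocessing into the polynomial in $|\mathcal{B}|$, which includes the size of the graphs in $\mathcal{B}$).

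To verify correctness, I would check \cref{eq:defrep} directly. Fix $Q \in \Pi(Z)$. By construction of $\mathcal{A}$, an element $F \in \mathcal{B}$ contributes the value $\cost(F)$ to the minimum on the left-hand side of \cref{eq:defrep} if and only if the pair $(\pi_F(Z), \cost(F)) \in \mathcal{A}$ contributes the value $\cost(F)$ to $\min\{w : (P,w) \in \mathcal{A} \wedge P \sqcup Q = \{Z\}\}$. Therefore the left-hand side of \cref{eq:defrep} equals this latter minimum over $\mathcal{A}$, which by \cref{def:represents:old} (applied through \cref{prop:findrepset}) equals $\min\{w : (P,w) \in \mathcal{R}_0 \wedge P \sqcup Q = \{Z\}\}$. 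The same correspondence, applied in reverse to $\mathcal{R}$ and $\mathcal{R}_0$, shows this equals the right-hand side of \cref{eq:defrep}.

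The main thing to get right is that the two notions of ``representing'' in \cref{def:represents:old} and \cref{def:represent:subgraph} really do align under the map $F \mapsto (\pi_F(Z),\cost(F))$; this is essentially bookkeeping, but it is where one must be careful that we keep at least one preimage per pair in $\mathcal{R}_0$ so that $\mathcal{R}$ truly realises the same minima as $\mathcal{R}_0$. There is no other nontrivial obstacle, as we do not need $Z$ to be a separator anywhere in the argument: the correspondence is purely syntactic in terms of $\pi_F(Z)$ and $\cost(F)$.
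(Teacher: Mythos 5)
Your proposal is correct and follows the same route as the paper: map each subgraph $F \in \mathcal{B}$ to the pair $(\pi_F(Z),\cost(F))$, invoke \cref{prop:findrepset} on the resulting set of weighted partitions, and then pull back each surviving pair to a witnessing subgraph. The paper phrases the pullback as ``choose one $F$ per surviving pair'' rather than maintaining pointers, but this is a cosmetic difference; the correctness argument via \cref{eq:defrep} is the same in both.
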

\begin{proof}
    Let $\mathcal{B}' = \{(\pi_F(Z), \cost(F)): F \in \mathcal{B} \}$. 
    Now we can apply Proposition~\ref{prop:findrepset} to $\mathcal{B}'$ to find a set $\mathcal{R}' \subseteq \mathcal{B}'$ of size at most $2^{|Z|-1}$ in time $2^{\Oh(|Z|)} \poly(|\mathcal{B}|)$, that represents $\mathcal{B}$ on universe $Z$. 

    Create the set $\mathcal{R}$ by adding for each $(P,w) \in \mathcal{R}'$ exactly one subgraph $F$ from $\mathcal{B}$ with $\pi_F(Z) = P$ and $\cost(F)=w$ (these must exist by definition of $\mathcal{B}'$ and the fact that $\mathcal{R}'\subseteq \mathcal{B}'$). Note, that $|\mathcal{R}| \le 2^{|Z|-1}$ and $\mathcal{R}\subseteq \mathcal{B}$. Hence, we are left to show that $\mathcal{R}$ represents $\mathcal{B}$,  
    i.e., for all $Q \in \Pi(Z)$ that \eqref{eq:defrep} holds. 

Fix some $Q \in \Pi(Z)$.
Note that the left-hand side is at most the right-hand side as $\mathcal{R}\subseteq \mathcal{B}$.
For the other direction, consider $F \in \mathcal{B}$ that minimizes the left side of the equation. We know $(\pi_F(Z),\cost(F)) \in \mathcal{B}'$ and because $\mathcal{R}'$ represents $\mathcal{B}'$ on universe $Z$, we find that there exists $(P, w) \in \mathcal{R}'$ such that $w = \cost(F)$ and $P \sqcup Q = \{Z\}$. The way we created $\mathcal{R}$ implies that there exists a $F'\in \mathcal{R}$ such that $\pi_{F'}(Z) \sqcup Q = \{Z\}$ and $\cost(F') = \cost(F)$. Hence, this shows the equality. 
\end{proof}

We now present the main ingredient for applying the rank-based approach for $K$-free treewidth. It will be used to compute the table entries of the leaf-introduce vertices of the tree $K$-free decomposition.

\begin{lemma}\label{lem:leafcase}
    Given as input an edge-weighted $n$-vertex graph~$G$ and two disjoint subsets of vertices $Y\subseteq V(G)$ and $Z \subseteq V(G)$ with~$Z \neq \emptyset$, we can compute in $2^{\Oh(|Z|)} \cdot \poly(n)$ time a set $\mathcal{R}(Z) \subseteq \mathcal{F}(Z) = \left \{F \subseteq G[Z\cup Y] \right  \}$ such that:
    \begin{itemize}
    \item $|\mathcal{R}(Z)| \le 2^{|Z|-1}$, and
    \item $\mathcal{R}(Z)$ represents $\mathcal{F}(Z)$ on $Z$.
    \end{itemize}
\end{lemma}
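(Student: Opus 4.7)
The plan is to avoid enumerating the exponentially large family $\mathcal{F}(Z)$ by reducing to a smaller family of \emph{canonical} subgraphs and trimming iteratively with \cref{cor:repsets}. I would first precompute, for every non-empty $X \subseteq Z$, a minimum-cost Steiner tree $T(X) \subseteq G[X \cup Y]$ spanning $X$ together with its cost $s(X) := \cost(T(X))$, by running the Dreyfus--Wagner algorithm independently on each $X$. Summing $3^{|X|}\poly(n)$ over $X \subseteq Z$ gives total time $2^{\Oh(|Z|)}\poly(n)$.

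The key structural observation is that for any $F \in \mathcal{F}(Z)$ whose non-singleton classes in $\pi_F(Z)$ are $C_1,\dots,C_m$, the canonical replacement $F^\star := \bigcup_{j=1}^m T(C_j)$ satisfies $\cost(F^\star) \le \cost(F)$ and $\pi_{F^\star}(Z) \sqsubseteq \pi_F(Z)$. The cost bound follows because the connected components of $F$ hitting $Z$ are vertex-disjoint and the component with $Z$-intersection $C_j$ costs at least $s(C_j)$; the partition relation holds because taking graph unions can only merge components. Since $\sqcup$ is monotone, any $Q$ with $\pi_F(Z) \sqcup Q = \{Z\}$ also has $\pi_{F^\star}(Z) \sqcup Q = \{Z\}$, so representing the canonical family is enough.

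I would then build sets $\mathcal{D}_0 := \{\emptyset\}$ and, for $i \ge 1$,
\[ \mathcal{D}_i := \mathrm{trim}\bigl(\mathcal{D}_{i-1} \cup \{F \cup T(X) : F \in \mathcal{D}_{i-1},\ X \subseteq Z,\ |X| \ge 2\}\bigr), \]
where $\mathrm{trim}$ applies \cref{cor:repsets} on the vertex set $Z$. Since $|\mathcal{D}_i| \le 2^{|Z|-1}$ after trimming, each pre-trim set has $2^{\Oh(|Z|)}$ elements and each iteration runs in $2^{\Oh(|Z|)}\poly(n)$ time; as every partition of $Z$ has at most $\lfloor |Z|/2\rfloor$ non-singleton classes, I iterate up to $i = \lfloor |Z|/2\rfloor$ and return $\mathcal{R}(Z) := \mathcal{D}_{\lfloor |Z|/2\rfloor}$. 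Correctness would be by induction on $i$ with the invariant that for every partition $P \in \Pi(Z)$ with non-singleton classes $C_1,\dots,C_i$ and every $Q$ satisfying $P \sqcup Q = \{Z\}$, the set $\mathcal{D}_i$ contains some $F'$ with $\pi_{F'}(Z) \sqcup Q = \{Z\}$ and $\cost(F') \le \sum_j s(C_j)$. For the inductive step, decompose $P = P_{-i} \sqcup P_{C_i}$ (where $P_{-i}$ replaces $C_i$ by singletons and $P_{C_i} := \{C_i\} \cup \{\{v\} : v \in Z\setminus C_i\}$), rewrite the hypothesis as $P_{-i} \sqcup (P_{C_i} \sqcup Q) = \{Z\}$ by associativity, apply the IH to $P_{-i}$ with modified constraint $Q' := P_{C_i} \sqcup Q$ to obtain a witness $F' \in \mathcal{D}_{i-1}$, and verify that $F' \cup T(C_i)$ is the desired witness in the pre-trim $\mathcal{D}_i$; trimming preserves it via the representing guarantee of \cref{cor:repsets}.

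The main obstacle to watch is that $\pi_{F \cup T(X)}(Z)$ may be strictly coarser than $\pi_F(Z) \sqcup \pi_{T(X)}(Z)$ when $F$ and $T(X)$ share vertices in $Y$. This asymmetry is benign: a coarser partition joins to $\{Z\}$ with at least as many partitions $Q$ as a finer one, so the coarsening only ever helps the representation, and the monotonicity of $\sqcup$ in the partition lattice lets the inductive step go through unchanged.
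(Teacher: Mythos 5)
Your proof is correct, and it uses the same overall machinery as the paper's proof --- iteratively union minimum-cost Steiner trees for subsets of $Z$ onto partial solutions, trimming after each round with \cref{cor:repsets} --- but it organizes the iteration and the invariant differently. The paper orders $Z = \{z_1, \dots, z_k\}$ and proceeds element by element: at round $i$ it only attaches Steiner trees $F_T$ for $T \subseteq Z_i$ with $z_i \in T$ (computed in $G[Z_i \cup Y]$), and the invariant is that $\mathcal{R}(Z_i)$ represents the entire family $\mathcal{F}(Z_i)$ of subgraphs of $G[Z_i \cup Y]$ on $Z$; correctness peels off the connected component of a minimizer that contains $z_i$. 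Your version instead iterates by the number of non-singleton blocks of the target partition, allows Steiner trees for all $X \subseteq Z$ with $|X| \ge 2$ in every round, and the invariant quantifies over partitions with at most $i$ non-singleton blocks; correctness peels off one block at a time via associativity of $\sqcup$. Your ``key structural observation'' (replacing $F$ by the union of per-block Steiner trees $T(C_j)$, using that the coarsening this might cause only helps in the join to $\{Z\}$) is the same insight that underlies the paper's component-splitting argument. Both schemes run in $2^{\Oh(|Z|)}\poly(n)$ time over $\Oh(|Z|)$ rounds, and both correctly handle the sharing-in-$Y$ issue via monotonicity of $\sqcup$. The paper's invariant is arguably cleaner since it is stated purely as a representation claim and does not reference partitions explicitly, but yours is equally valid; you may find it reduces the number of rounds to $\lfloor |Z|/2 \rfloor$ at the expense of a slightly larger per-round candidate set.

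One small wording issue to watch: you state the invariant as ``partitions with non-singleton classes $C_1, \dots, C_i$'', which reads as exactly $i$ such classes, but the final step needs it to hold for all partitions, i.e., for at most $\lfloor |Z|/2 \rfloor$ non-singleton classes. This is not a real gap since the pre-trim set at round $i$ contains $\mathcal{D}_{i-1}$ and trimming preserves representation, so the invariant for smaller counts carries forward; you should just say ``at most $i$'' explicitly.
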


\begin{proof}
We want to find a representative set for all possible subgraphs in $G[Y \cup Z]$. For this purpose, we build $\mathcal{F}(Z)$ by increasing the size of $Z$, while maintaining the bounded size of $\mathcal{F}(Z)$. Order the elements of $Z$ arbitrarily, i.e., let $Z = \{z_1,\dots,z_{k}\}$. Let $Z_i = \{z_1,\dots,z_i\}$ be the first $i$ elements of $Z$ and define $\mathcal{F}(Z_i) = \{ F \subseteq G[Z_i \cup Y] \}$.
We will iteratively compute sets $\mathcal{R}(Z_i) \subseteq \mathcal{F}(Z_i)$ with the following properties:
\begin{itemize}
    \item $|\mathcal{R}(Z_i)| \le 2^{|Z|-1}$, and
    \item $\mathcal{R}(Z_i)$ represents $\mathcal{F}(Z_i)$ on $Z$.
    \end{itemize}

We set $\mathcal{R}(\emptyset) = \{ \emptyset \}$.
To compute $\mathcal{R}(Z_i)$ from $\mathcal{R}(Z_{i-1})$ we execute the following steps.
\begin{enumerate}
\item Initialize $\mathcal{B}(Z_i) = \mathcal{R}(Z_{i-1})$. 

\item For all $F\in \mathcal{R}(Z_{i-1})$, for all $T \subseteq Z_{i}$ such that $z_i \in T$, compute a minimum-cost Steiner tree $F_T$ for terminals $T$ in $G[Z_i\cup Y]$ in $2^{\Oh(|Z|)}\poly(n)$ time using the Dreyfus-Wagner algorithm~\cite{dreyfus1971steiner}.
Add $F\cup F_T$ to $\mathcal{B}(Z_i)$. 

\item Compute a representative set $\mathcal{R}(Z_{i}) \subseteq \mathcal{B}(Z_{i})$ with $|\mathcal{R}(Z_i)| \le 2^{|Z|-1}$ using Corollary~\ref{cor:repsets}.
\end{enumerate}

The resulting set~$\mathcal{R}(Z_k) = \mathcal{R}(Z)$ is given as the output. The main work happens in Step~2. For each of the~$|\mathcal{R}(Z_{i-1})| \leq 2^{|Z|-1}$ choices for~$F$, there are at most~$2^{|Z|}$ choices for~$T$. Hence we invoke the Dreyfus-Wagner algorithm~$2^{\Oh(|Z|)}$ times for a terminal set of size~$|T| \leq |Z|$. The resulting set~$\mathcal{B}(Z_i)$ has size~$2^{\Oh(|Z|)}$, so that \cref{cor:repsets} also runs in time~$2^{\Oh(|Z|)} \poly(n)$ for each of the~$k + 1 \in \Oh(n)$ choices of~$i$. The run-time bound follows.

We prove correctness by induction over $i$. 
For the base case~$i=0$, we argue that~$\mathcal{R}(\emptyset)$ satisfies the two conditions. The set $\mathcal{F}(Z_0) = \mathcal{F}(\emptyset)$ only contains subgraphs $F$ of $G[Y]$ and for any of these we have $\pi_F(Z) = \{\{z_1\},\{z_2\},\dots,\{z_k\}\}$ since~$Y \cap Z = \emptyset$. Hence, the only $Q \in \Pi(Z)$ satisfying $\pi_F(Z) \sqcup Q = \{Z\}$ is $Q = \{\{Z\}\}$. Since for $F=\emptyset$ we have $\pi_F(Z) \sqcup Q = \{Z\}$, we find that $\mathcal{R}(\emptyset) = \{\emptyset\}$ is a representative set for $\mathcal{F}(\emptyset)$ on $Z$.     

For the case $i>0$, first note that any time we add a subgraph~$F \cup F_T$ to~$\mathcal{B}(Z_i)$ in Step 2, we have~$F \in \mathcal{R}(Z_{i-1}) \subseteq \mathcal{F}(Z_{i-1}) \subseteq \mathcal{F}(Z_i)$, and~$F_T \subseteq G[Z_i \cup Y]$. Hence their union is a subgraph of~$G[Z_i \cup Y]$. Therefore, $\mathcal{R}_i(Z) \subseteq \mathcal{B}_i(Z) \subseteq \mathcal{F}_i(Z)$. 
Moreover, we have $|\mathcal{R}(Z_i)|\le 2^{|Z|-1}$ because we constructed it using \cref{cor:repsets}. All that remains to show is that $\mathcal{R}(Z_i)$ represents $\mathcal{F}(Z_i)$ on $Z$, i.e., for all $Q \in \Pi(Z)$ we have 
\begin{equation*}
\min\{\cost(F) \colon F \in \mathcal{F}(Z_i) \wedge \pi_F(Z) \sqcup Q = \{Z\}\} = \min\{\cost(F) \colon F \in \mathcal{R}(Z_i)  \wedge \pi_F(Z)\sqcup Q = \{Z\}\}.
\end{equation*}

As $\mathcal{R}(Z_i) \subseteq \mathcal{F}(Z_i)$ we directly see that the left-hand side is at most the right-hand side. For the other direction, fix $Q \in \Pi(Z)$ and choose $F \in \mathcal{F}(Z_i)$ that minimizes the left-hand side. Let $C$ be the connected component of $F$ containing~$z_i$ (possibly $C = \emptyset$). Let $H = F \setminus C$. Note that $H \in \mathcal{F}(Z_{i-1})$ as now $H \subseteq G[Z_{i-1} \cup Y]$. Take $Q' = \pi_C(Z) \sqcup Q$. We find that
\[\pi_H(Z) \sqcup Q' = \pi_H(Z) \sqcup \pi_C(Z)\sqcup Q  = \pi_F(Z) \sqcup Q = \{Z\}.\] 

Using induction, we find that there exists $H' \in \mathcal{R}(Z_{i-1})$ such that $\pi_{H'}(Z) \sqcup Q' = \{Z\}$ and $\cost(H') = \cost(H) = \cost(F)- \cost(C)$. The algorithm at one point considered $H'$ and $T = V(C) \cap Z$ in Step 2, and added $ H' \cup F_T$ to $\mathcal{B}(Z_i)$. Note that 
\begin{align*}\pi_{H' \cup F_T}(Z) \sqcup Q & = \pi_{H' \cup F_T}(Z) \sqcup Q \sqcup \pi_C(Z) && \text{ as } F_T \text{ connects $V(C)\cap Z$ }\\
& = \pi_{H' \cup F_T}(Z) \sqcup Q' && \text{ by definition of~$Q'$} \\
& = \{Z\}.&& \text{ as } \pi_{H'}(Z) \sqcup Q' = \{Z\}
\end{align*}

Using Proposition~\ref{prop:findrepset}, we find that there exists a subgraph $ F' \in \mathcal{R}(Z_{i})$ such that $F' \sqcup Q = \{Z\}$. To bound its cost, note that $\cost(F_T) \le \cost(C)$ as $F_T$ is a minimum-cost Steiner tree in $G[Z_i \cup Y]$ for terminal set $T$ and~$C$ is a candidate solution. Then we derive
\begin{align*} 
\cost(F') &= \cost(H' \cup F_T) 
\le \cost(H') + \cost(F_T) 
= \cost(F) - \cost(C) + \cost(F_T) \\
&\le \cost(F) - \cost(C) + \cost(C) 
= \cost(F).
\end{align*}
Hence, $F'$ shows that the left-hand side of the equation is greater than or equal to the right-hand side. This concludes the proof of \cref{lem:leafcase}.
\end{proof}

We can use Lemma~\ref{lem:leafcase} to solve \textsc{Steiner Tree} in $2^{\twtfree(G,K)} \poly(n)$ time. Below, we give a self-contained presentation of the existing~\cite{BODLAENDER201586} rank-based dynamic-programming algorithm for \textsc{Steiner Tree} parameterized by treewidth---thereby correcting some small issues from the literature---and show how \cref{lem:leafcase} allows it to be extended for $K$-free treewidth.

\terminaltwidth*
\begin{proof}
    The first step is to use Theorem~\ref{thm:findKfreedec} to find a nice tree $K$-free-treewidth decomposition $(\T,\chi,L)$ of width $\le 5\twtfree(G,K)+5$.

    For the actual algorithm on the decomposition, we use the algorithm from Bodlaender, Cygan, Kratsch, and Nederlof~\cite{BODLAENDER201586} to solve weighted Steiner tree on graphs of bounded treewidth as a building block. Let us recall what this algorithm computes. Recall that for any node $x \in V(\T)$, the graph $G_x$ is defined as the subgraph of $G$ with as vertices any vertex that was introduced in (a descendant of) $x$ and as edges any edge that was introduced in (a descendant of) $x$. The algorithm from Bodlaender, Cygan, Kratsch, and Nederlof~\cite{BODLAENDER201586} then computes, for every node $x \in V(\T)$ and $Z \subseteq \chi(x)$, a set $\mathcal{A}'_x(Z)$ that represents $\mathcal{A}_x(Z)$ (according to Definition~\ref{def:represent:partition}) where\footnote{There are two major differences between the definition of table entries in Section 3.3 of the paper by Bodlaender, Cygan, Kratsch, and Nederlof~\cite{BODLAENDER201586} and the one given here. The first is the inclusion that all terminals in $G_x$ should be in $V(F)$, which was simply missing before. The second is that $V(F)\cap \chi(x) \subseteq Z$ was changed to equality, to correctly ensure that each connected component of $F$ has at least one endpoint in $Z$.\label{foot:issues}}  
    \[\mathcal{A}_x(Z)  = \left\{\left(P, \min_{F \in \mathcal{E}_x(P,Z)} \cost(F)\right) \colon P \in \Pi(Z) \wedge \mathcal{E}_x(P,Z) \not = \emptyset\right\},\]
    and
    \[\mathcal{E}_x(P,Z) = \left\{ F \subseteq G_x \colon  \begin{aligned}&
    V(F) \cap \chi(x) = Z, \pi_F(Z)  = P, \\ &\#\mathsf{cc}(F) = |P|, V(G_x)\cap K \subseteq V(F)  \end{aligned}\right\}.\]

Intuitively,~$\mathcal{E}_x(P,Z)$ is the set of all subgraphs~$F$ of~$G_x$ whose connected components partition the vertices of~$Z$ exactly as prescribed by~$P$, which contain all terminals of~$G_x$, for which every connected component intersects~$\chi(x)$, and for which~$Z$ coincides with the set of vertices from bag~$\chi(x)$ used by~$F$. The set~$\mathcal{A}_x(Z)$ then contains the information about such subgraphs that is needed to do dynamic programming for  \textsc{Steiner Tree}: for each partition~$P$ of~$Z$ that can be realized by at least one subgraph of~$G_x$, it records the minimum cost of such a subgraph.

    We will use the original algorithm from Bodlaender, Cygan, Kratsch, and Nederlof~\cite{BODLAENDER201586} for all nodes of the decomposition that are not a leaf, nor a leaf-introduce node. For completeness, we will repeat how to compute the table entries for these nodes. The correctness of the formulas follows from the fact that these operations preserve representation, meaning that if $\mathcal{A}'$ represents $\mathcal{A}$, then $f(\mathcal{A}')$ represents $f(\mathcal{A})$. For the proof that the operations used in the recursive formulas preserve representation we refer the reader to the original paper~\cite{BODLAENDER201586}. For the leaf-introduce nodes we will use a new base case that can directly be computed with Lemma~\ref{lem:leafcase}.  
    Note that to ensure that for all internal nodes $x \in \T$ and $Z \subseteq \chi(x)$ we have $|\mathcal{A}'_x(Z)| \le 2^{\Oh(\twtfree(G,K))}$, we will compute a representative set of it right after computing $\mathcal{A}'_x(Z)$ using Proposition~\ref{prop:findrepset} and use it to replace $\mathcal{A}'_x(Z)$. 
    
    \subparagraph*{Introduce vertex node $x$ with child $c$ introducing $v$} When a vertex $v$ is introduced, then either it is going to be part of a partial solution (i.e., $v\in Z$) or not (i.e., $v\not \in Z$). If $v \in Z$, then we will add it as a singleton into each partition of $\mathcal{A}'_c(Z\setminus \{v\})$. This can be interpreted as adding the vertex $v$ (without any incident edges) to any partial solution found in the child bag, using the same boundary vertices, except for $v$. If $v \not \in Z$ and~$v \notin K$, then it should not be used, so we can just copy the value of $\mathcal{A}'_c(Z)$. If~$v \notin Z$ but $v \in K$, then we have~$\mathcal{E}_x(P,Z) = \emptyset$ by definition and may therefore set~$\mathcal{A}_x'(Z) = \emptyset$.
    \[\mathcal{A}'_x(Z) = \begin{cases}
    \{(P \cup \{v\},w): (P,w) \in \mathcal{A}'_c(Z \setminus \{v\})\}  & \text{if } v \in Z \\
    \mathcal{A}'_c(Z) & \text{if } v \not \in Z, v \not \in K\\
    \emptyset & \text{if } v \not \in Z, v \in K \\
    \end{cases}\]

    \subparagraph*{Forget vertex node $x$ with child $c$ forgetting $v$} 
    When $v$ is forgotten in $x$, then we combine the possibilities of whether $v$ was used in a partial solution (corresponding to $\mathcal{A}'_c(Z \cup \{v\})$) or not (corresponding to $\mathcal{A}'_c(Z)$). However, we must ensure that $v$ was not a singleton in a partition if it was used, so that all connected components of partial solutions are connected to the boundary. 
    

    In the following recursive formula we use notation from \cite{BODLAENDER201586} to denote $P_{\downarrow Z}$ as the partition that is the result of taking $P$ and removing all elements not in $Z$ from the sets of the partition, i.e., where $v$ is removed from the partition that it is in.
    \[\mathcal{A}'_x(Z) = \mathcal{A}'_c(Z) \cup \{(P_{\downarrow Z},w): (P,w) \in \mathcal{A}'_c(Z \cup \{v\}) \wedge \{v\} \not \in P\} \]

    \subparagraph*{Introduce edge node $x$ with child $c$ introducing $\{u,v\}$}
     When an edge $\{u,v\}$ is introduced, then either it is going to be part of a partial solution or not. However, it can only be part of a partial solution if both its endpoints are used by the partial solution, i.e., when $u\in Z$ and $v\in Z$. Partial solutions that do not use $\{u,v\}$ correspond to the table entry $\mathcal{A}'_c(Z)$. If $\{u,v\}$ is used, we take all $(P,w)$ in $\mathcal{A}_c'(Z)$ and add the edge. This means that the sets of $P$ containing $u$ and $v$ should be merged since they are now connected, meaning we get partition $P \sqcup \pi_{\{\{u,v\}\}}(Z)$. Moreover, the weight of such a partition increases by $\cost(\{u,v\})$. 
     
    \[\mathcal{A}'_x(Z) = \begin{cases}
    \mathcal{A}'_c(Z) \cup \{(P \sqcup \pi_{\{\{u,v\}\}}(Z), w + \cost(\{u,v\})): (P,w) \in \mathcal{A}'_c(Z)\}  & \text{if $\{u,v\} \subseteq Z$,}\\
    \mathcal{A}'_c(Z) & \text{otherwise.}\\
    \end{cases}\]
    
    \subparagraph*{Join node $x$ with child nodes $c_1$ and $c_2$} The recursively formula in this case corresponds to combining all possible partial solutions, meaning that their connectivities and weights are combined into a new partition-weight combination. 
    \[\mathcal{A}'_x(Z) = \{(P_1 \sqcup P_2, w_1 + w_2): (P_1,w_1) \in \mathcal{A}'_{c_1}(Z) \wedge (P_2,w_2) \in \mathcal{A}'_{c_2}(Z) \}\]

    \subparagraph*{Base case: Leaf introduce node $x$ with child node $c$}
    Let $x$ be a leaf introduce node of $\T$ with child $c \in V(\T)$ that is a leaf of $\T$. We will show how to compute the table entry of $\mathcal{A}'_x(Z)$ for all $Z \subseteq \chi(x)$ in $2^{\Oh(|\chi(x)|)}\cdot \poly(n)$ time. By definition of a nice tree $K$-free-decomposition, node $c$ is the only child of $x$ and $\chi(c) \setminus \chi(x) \subseteq L$. We denote $Y = \chi(c)\setminus \chi(x)$ and remark that $Y \subseteq L$ does not contain any terminals. Moreover, we have $G_x = G[\chi(c)]$. 
    
    For any $Z \subseteq \chi(x)$ such that $K \cap \chi(x) \subseteq Z$ define $\mathcal{F}_x(Z) = \{F \subseteq G[ Z \cup Y]\}$ and note that we can compute a set $\mathcal{R}_x(Z)\subseteq \mathcal{F}_x(Z)$ that represents  $\mathcal{F}_x(Z)$ on $Z$ (according to Definition~\ref{def:represent:subgraph}) of size at most $2^{|Z|-1}$ in $2^{\Oh(|Z|)}\cdot \poly(n)$ time, using Lemma~\ref{lem:leafcase}. We will show that we can set $\mathcal{A}'_x(Z)$ to be:
    \[\mathcal{A}'_x(Z) = \left\{\left(P, \min_{F \in \mathcal{R}_x(Z): \pi_F(Z) = P} \cost(F) \right): P \in \Pi(Z) \wedge \{F \in \mathcal{R}_x(Z): \pi_F(Z) =P\} \not = \emptyset\right \}.\]

    Intuitively, the set~$\mathcal{A}'_x(Z)$ of weighted partitions representing partial solutions of~$G_x$ whose intersection with the current bag~$\chi(x)$ matches~$Z$ is built as follows: for each partition~$P$ of~$Z$ that is realized by the connectivity pattern of at least one subgraph stored in~$\mathcal{R}_x(Z)$, we add that partition together with the minimum cost of any subgraph realizing it to~$\mathcal{A}'_x(Z)$.
    
    To show correctness, we first rewrite the definition of $\mathcal{A}_x(Z)$ for leaf introduce nodes and show that $\mathcal{A}'_x(Z)$ represents $\mathcal{A}_x(Z)$ (according to Definition~\ref{def:represent:partition}). For this purpose, we first rewrite the definition of $\mathcal{E}_x(P,Z)$ for a leaf introduce node $x$. Recall the definition: 
   \[\mathcal{E}_x(P,Z) = \left\{ F \subseteq G_x \colon \begin{aligned}& 
    V(F) \cap \chi(x) = Z, \pi_F(Z)  = P, \\ &\#\mathsf{cc}(F) = |P|, V(G_x)\cap K \subseteq V(F)  \end{aligned}\right\}.\]

    First note that by definition of tree $K$-free-decomposition, there are no terminals in $Y = \chi(c)\setminus\chi(x)$. Hence, for any $F \subseteq G_x$ such that $V(F)\cap \chi(x) = Z$, we have that the requirements $V(G_x) \cap K \subseteq V(F)$ and $K \cap \chi(x) \subseteq Z$ are equivalent, i.e., we can replace the first requirement with the second in the definition. 
    Now let 
    \[\mathcal{E}'_x(P,Z) = \begin{cases} \{ F \subseteq G[Z \cup Y] \colon  
    \pi_F(Z)  = P\} & \text{if } K\cap \chi(x) \subseteq Z, \\ \emptyset &\text{otherwise}. \end{cases}\]
    We will show that $\min\{ \cost(F): F \in \mathcal{E}_x(P,Z)\} = \min\{ \cost(F): F \in \mathcal{E}'_x(P,Z)\}$. First note that $\mathcal{E}_x(P,Z) \subseteq \mathcal{E}'_x(P,Z)$ as in principle we removed the requirements $\#\mathsf{cc}(F) = |P|$ and $V(F) \cap \chi(x) = Z$ from the definition of $\mathcal{E}_x(P,Z)$ to obtain $\mathcal{E}'_x(P,Z)$. 
    
    Now take any $F \in \mathcal{E}'_x(P,Z) \setminus \mathcal{E}_x(P,Z)$. We will prove that there is a $F' \in \mathcal{E}_x(P,Z)$ with equal or smaller cost. 
    Recall that $\pi_F(Z)=P$ does not necessarily imply that $V(F) \cap \chi(x)=Z$, as the definition of $\pi_F(Z)$ adds singleton sets for vertices in $Z\setminus V(F)$. However, we may assume $V(F) \cap \chi(x)  = Z$, as otherwise we can add such a missing vertex from $Z \setminus (V(F)\cap \chi(x))$ to $V(F)$ as an isolated vertex to get to a subgraph that is also in $\mathcal{E}'_x(P,Z)$ with the same cost as $F$. Since~$Y \cap \chi(x) = \emptyset$ by definition and~$F \subseteq G[Z \cup Y]$, the subgraph~$F$ cannot contain any vertex of~$\chi(x) \setminus Z$, so that we indeed obtain~$V(F) \cap \chi(x) = Z$. Hence, we are only left with the case of $\#\mathsf{cc}(F) > |P|$, i.e., there is a connected component of $F$ not intersecting $Z$. Since this connected component cannot contain any terminals (as $Y$ does not contain any terminals), we find that removing the connected component from $F$ results in a subgraph that is of smaller (or equal) cost and part of $\mathcal{E}'_x(P,Z)$ . Hence, we may assume that for the $F \in \mathcal{E}'_x(P,Z)$ of minimum cost, we have $\#\mathsf{cc}(F) = |P|$, i.e., $F \in \mathcal{E}_x(P,Z)$. Hence, we can conclude that $\min\{ \cost(F): F \in \mathcal{E}_x(P,Z)\} = \min\{ \cost(F): F \in \mathcal{E}'_x(P,Z)\}$. As a consequence, we can replace $\mathcal{E}_x(P,Z)$ with $\mathcal{E}'_x(P,Z)$ in the definition of $\mathcal{A}_x(Z)$ without changing its contents.

    Recall that~$\mathcal{F}_x(Z) = \{F \subseteq G[ Z \cup Y]\}$, so that $\mathcal{E}'_x(P,Z) = \{F \in \mathcal{F}_x(Z): \pi_F(Z) = P\}$ whenever $K \cap \chi(x)\subseteq Z$. Hence, for any $Z$ such that $K \cap \chi(x) \subseteq Z$ we actually have:
    \[\mathcal{A}_x(Z) = \left\{\left(P, \min_{F \in \mathcal{F}_x(Z): \pi_F(Z) = P} \cost(F) \right): P \in \Pi(Z) \wedge \{F \in \mathcal{F}_x(Z): \pi_F(Z) =P\} \not = \emptyset\right \}.\]

    Now that we've rewritten $\mathcal{A}_x(Z)$, it is time to prove that $\mathcal{A}'_x(Z)$ represents $\mathcal{A}_x(Z)$, i.e., we show that 
    for all $Q \in \Pi(Z)$:
    \[\min\{w \colon (P,w) \in \mathcal{A}_x(Z) \wedge P\sqcup Q = \{Z\}\} = \min\{w \colon (P,w) \in \mathcal{A}'_x(Z) \wedge P\sqcup Q = \{Z\}\}.\] 
    Note that $\mathcal{A}'_x(Z) \subseteq \mathcal{A}_x(Z)$ by the new definition of $\mathcal{A}_x(Z)$, as $\mathcal{R}_x(Z) \subseteq \mathcal{F}_x(Z)$. Hence we have that the left-hand side is smaller or equal to the right-hand side of the equality. 

    For the other direction, take any $(P,w)\in\mathcal{A}_x(Z)$ that attains the minimum value. Hence, there exists a $F \in \mathcal{F}_x(Z)$ such that $\pi_F(Z) \sqcup Q = \{Z\}$ and $\cost(F)=w$. Because $\mathcal{R}_x(Z)$ represents $\mathcal{F}_x(Z)$ on $Z$ (according to Definition~\ref{def:represent:subgraph}), we know that there exists a $F' \in \mathcal{R}_x(Z)$ such that $\pi_{F'}(Z) \sqcup Q = \{Z\}$ and $\cost(F') = w$. Therefore, there exists $(\pi_{F'}(Z),w) \in \mathcal{A}'_x(Z)$ with $\pi_{F'}(Z)\sqcup Q = \{Z\}$, i.e., we find that 
       \[\min\{w \colon (P,w) \in \mathcal{A}_x(Z) \wedge P\sqcup Q = \{Z\}\} = \min\{w \colon (P,w) \in \mathcal{A}'_x(Z) \wedge P\sqcup Q = \{Z\}\}.\]

    To summarize, the algorithm will compute the values of $\mathcal{A}_x'(Z)$ using the tree $K$-free decomposition and the recursive formulas given above. Each time after a value of $\mathcal{A}_x'(Z)$ is computed, the algorithm will apply Proposition~\ref{prop:findrepset} to find a representative set and use it to replace $\mathcal{A}'(Z)$. This last step ensures that $\mathcal{A}'(Z) \le 2^{\twtfree(G,K)}$ throughout the algorithm. The algorithm will then return the value $\min_{Z\subseteq \chi(r)}\{\mathcal{A}'_r(Z)\}$ where $r$ denotes the root of the tree decomposition. 
    
    For the running time analysis, note that computing the values for leaf-introduce nodes takes $2^{\Oh(\twtfree(G,K))}\cdot \poly(n)$ time, using Lemma~\ref{lem:leafcase} and the definition of $\mathcal{A}'_x(Z)$ for such nodes directly. Moreover, computing the values for internal vertices takes $2^{\Oh(\twtfree(G,K))}\cdot \poly(n)$ time using the recursive formulas for other internal nodes, as we ensure that $|\mathcal{A}'_x(Z)| \le 2^{\twtfree(G,K)}$ for all $x$ and $Z\subseteq \chi(x)$ throughout the algorithm. Therefore, the total algorithm runs in time bounded by $2^{\Oh(\twtfree(G,K))}\cdot\poly(n)$.  
\end{proof}

\section{Conclusion} \label{sec:conclusion}
In this paper, we show that it is possible to design FPT algorithms for \textsc{Steiner Tree} when the considered parameter is structurally smaller than the number of terminals. For this purpose, we have extended the definitions of $\mathcal{H}$-free treewidth and $\mathcal{H}$-elimination distance to terminal-free variants of these parameters. 

Our polynomial-space algorithm for \textsc{Steiner Tree} parameterized by multiway cut is slightly superexponential, due to the number of different $S$-connecting systems. Is it possible to design an algorithm for \textsc{Steiner Tree} parameterized by $|S|$ that uses both polynomial space \emph{and} single-exponential FPT time?

Another direction for future research is to apply such terminal-aware parameterizations to other problems that rely on terminals. An obvious example is \textsc{Multiway Cut} itself. Another problem to consider is \textsc{Shortest $K$-Cycle}, where one is given an undirected graph and a set of terminals $K$ and asked to find a simple cycle that goes through all terminals. Bj\"orklund, Husfeldt, and Taslaman~\cite{bjorklund2012shortest} give a $2^{|K|}\poly(n)$ time algorithm for this problem. Is it fixed-parameter tractable parameterized by multiway cut, or even $K$-free treewidth? The same question can be asked for the \textsc{Subset Traveling Salesperson Problem} (\textsc{Subset TSP}), which asks for a minimum-weight tour visiting a given subset~$K$ of terminal vertices. Many techniques developed for \textsc{Steiner Tree} parameterized by treewidth also apply to \textsc{Hamiltonian Cycle} and other variants of TSP~\cite{BODLAENDER201586,cygan2022solving}; we expect that our techniques for terminal-aware parameterizations can be extended for \textsc{Subset TSP}.

\bibliographystyle{plainurl}
\bibliography{references}

\end{document}